\theoremstyle{plain}
\newtheorem{theorem}{Theorem}[section]
\newtheorem{proposition}[theorem]{Proposition}
\theoremstyle{definition}
\newtheorem{definition}[theorem]{Definition}
\theoremstyle{remark}
\newtheorem{remark}[theorem]{Remark}
\newtheorem{example}[theorem]{Example}
\newcommand{\Qset}{\mathbb{Q}}
\newcommand{\DP}{\ensuremath{\mathsf{P}}}
\newcommand{\FP}{\ensuremath{\mathsf{FP}}}
\newcommand{\NP}{\ensuremath{\mathsf{NP}}}
\newcommand{\sharpP}{\ensuremath{\mathsf{\#P}}}
\newcommand{\PM}{\mathcal{PM}}
\newcommand{\cM}{\mathcal{M}}
\newcommand{\RMatch}{\operatorname{RMatch}}
\newcommand{\unsat}{\operatorname{unsat}}
\newcommand{\poly}{\operatorname{poly}}
\newcommand{\sgn}{\operatorname{sgn}}
\newcommand{\bfx}{\mathbf{x}}
\definecolor{bananamania}{rgb}{0.98, 0.91, 0.71}
\definecolor{bleudefrance}{rgb}{0.19, 0.55, 0.91}
\definecolor{rebeccapurple}{rgb}{0.4,0.2,0.6}
\definecolor{fakered}{rgb}{0,0,0}
\title{Probabilistic Generating Circuits---Demystified$^\ast$}
\author{ Sanyam Agarwal \\
	Saarland University\\
    Saarland Informatics Campus\\
	Saarbr\"ucken, Germany\\
	\texttt{agarwal@cs.uni-saarland.de} \\
	\And
	Markus Bl\"aser \\
	Saarland University\\
    Saarland Informatics Campus\\
	Saarbr\"ucken, Germany\\
	\texttt{mblaeser@cs.uni-saarland.de} \\
}
\date{}
\begin{document}
\maketitle

\begin{abstract}
	Zhang et al.~(ICML 2021, PLMR 139, pp. 12447–1245) introduced probabilistic 
generating circuits (PGCs) as a probabilistic model to unify probabilistic
circuits (PCs) and determinantal point processes (DPPs). At a first glance, PGCs store
a distribution in a very different way, they compute the probability generating polynomial
instead of the probability mass function and it seems that this is the main reason
why PGCs are more powerful than PCs or DPPs. However, PGCs also allow for negative weights,
whereas classical PCs assume that all weights are nonnegative. One of the main insights
of our paper is that the negative weights are responsible for the power of PGCs and not
the different representation. PGCs are PCs in disguise, 
in particular, we show how to transform any PGC into a PC with negative weights
with only polynomial blowup.

PGCs were defined by Zhang et al.\ only for binary
random variables. As our second main result, we show that there is a 
good reason for this: we prove that PGCs for categorial variables with larger
image size do not support tractable marginalization unless $\NP = \DP$.
On the other hand, we show that we can model categorial variables 
with larger image size as PC with negative weights computing set-multilinear
polynomials. These allow for tractable marginalization. 
In this sense, PCs with negative weights strictly subsume PGCs.
\end{abstract}

\renewcommand*{\thefootnote}{\fnsymbol{footnote}}

\footnotetext[1]{\cite{broadrick2024polynomial} independently obtain some of the results
presented in this paper. In particular, they
prove that probabilistic circuits and probabilistic generating 
circuits for binary variables are equivalent 
(our Theorem~\ref{PGCinPCThm}) as well as the hardness of marginalization for probabilistic generating circuits with at least four categories (our Theorem~\ref{NoEMforQuartenary}). These results were obtained independently of ours and were submitted to a conference
around the same time as ours.}

\renewcommand*{\thefootnote}{\arabic{footnote}}

\section{Introduction}
Probabilistic modeling is a central task in machine learning. 
When the underlying models become
large and complicated, however, probabilistic inference easily becomes intractable, see \cite{ROTH1996273} for an explanation.
Therefore, it is important to develop probabilistic models that are tractable (TPMs for short),
that is, they allow for efficient probabilistic inference.
On the other hand, 
the probabilistic models should be as expressive
efficient as possible (in the sense of \citet{DBLP:journals/corr/MartensM14}),
which means that they are able to represent as many different distributions as possible. 
The more classes we can represent,
the broader the spectrum of applications of the model.
There is typically a tradeoff between expressiveness and
tractability. The more expressive the model, the harder will
be probabilistic inference.

Examples of tractable models are for instance bounded treewidth graphical models
\cite{DBLP:journals/jmlr/MeilaJ00, koller}, the well-known determininantal point processes
\cite{Borodin:DPP, DPPmain}, or probabilistic circuits like for instance sum-product networks \cite{Darwiche:Modeling,DBLP:conf/kr/KisaBCD14,DBLP:journals/corr/abs-1202-3732}.
These models represent probability distributions by computing
probability mass functions: the input is an assignment to the random
variables and the output is the corresponding probability of the event.
 
\emph{Probabilistic circuits (PCs)}, see also Section~\ref{sec:PC},  
compute (unnormalized) probability distributions for \emph{syntactic}\footnote{By a syntactic property, we mean a property that can be checked by inspecting the structure of the circuit,
but we do not need to take the input-output behaviour of the circuit into account,
in constrast to semantic properties.}
reasons:
the weights on the edges are required to be nonnegative. 
So-called \emph{decomposable} PCs are tractable again for \emph{syntactic} reasons: The scopes of each product gate are disjoint and therefore marginalization and multiplication commute. 

Another popular tractable model are \emph{determiniantal point processes (DPPs)}. \Citet{DBLP:conf/uai/ZhangHB20} show that PCs in general do not subsume DPPs. 
Shortly after, 
\citet{PGCzhang} propose a new model called 
\emph{probabilistic generating circuits (PGCs)} that represent a distribution
by a generating polynomial. PGCs allow for efficient marginalization and subsume
both PCs and DPPs. It seems that the power of PGCs comes from the fact that they
use a different representation. However, there is another subtle difference
between PGCs and PCs, namely, PGCs allow for negative constants.
\Citet[page 12447]{PGCzhang} write: 
``Because of the presence of negative parameters, it is not guaranteed that the polynomials represented by a PGC is a probability generating polynomial: it might contain terms that are not multiaffine or have negative coefficients''.
Therefore it is a \emph{semantic} property that a PGC represents a probability distribution, the designer of the PGC has to ensure that it computes a 
probability generating function. 

One of the key insights of this paper is that the important property
of PGCs is that they allow for negative constants
and not the fact that they compute a probability generating function instead of 
a probability distribution itself. More precisely, we show how to turn any PGC into a PC 
with negative weights that computes the probability distribution represented by the PGC
and this PC computes a
\emph{set-multilinear} polynomial (see also Definition~\ref{def:setmulti}). The syntactic property of nonnegative weights is replaced by the semantic property of computing a probability distribution and the syntactic property of being decomposable (also called ``syntactically set-multilinear'') 
is replaced by the semantic property of computing a set-multilinear polynomial (but intermediate results might not be set-multilinear).
So PGCs are nothing but PCs in disguise.

\section{Probabilistic circuits}
\label{sec:PC}

An \emph{arithmetic} circuit is a acyclic directed graph. Nodes of indegree $0$ are called
\emph{input nodes}. Internal nodes are either addition nodes or multiplication nodes.
Addition nodes have edge weights on the incoming edges and compute the weighted sum of the inputs.
Product nodes compute the unweighted product of its inputs. 

\emph{Probabilistic circuits (PCs)} are representations of probability distributions that allow tractable inference. See \cite{ProbCirc20} for an in-depth description. They are arithmetic circuits
such that the input nodes correspond to probability
distributions on random variables $X_i$. In the case of binary random variables, we can for instance assume that each distribution
is given as $p x_i + (1-p) \bar x_i$ and therefore the input nodes are labelled by variables
$x_i$ and $\bar x_i$. 
PCs compute probability mass functions.
Figure~\ref{fig:pc:example} shows an example.

\begin{example}
If we set $x_1 = 1$ and $\bar x_1 = 0$ 
as well as $x_2 = 0$ and $\bar x_2 = 1$ in the PC in 
Figure~\ref{fig:pc:example}, we get $\frac 16$ as a result,
which is $\Pr[X_1 = 1, X_2 = 0]$.
\end{example}

The edge weights of a PC are typically assumed to be nonnegative, therefore, PCs
compute (unnormalized) probability distributions by design. We will later also consider PCs
with potentially negative weights. We will call these \emph{nonmonotone} PCs. 
Here, the designer of the PC has to ensure that it computes a probability distribution.

PCs allow for tractable marginalization provided that the PC is \emph{decomposable}
and \emph{smooth}. Decomposable means that for each multiplication gate, the 
scope of the children (i.e., variables it depends on) are disjoint.
A PC is smooth, if for each addition gate the scopes of the children are the same.
While decomposability is a crucial property, smoothness is typically easy
to ensure, see \cite{DBLP:conf/nips/ShihBBA19}.

\begin{example}
The PC in Figure~\ref{fig:pc:example} is decomposable and smooth.
If we set $x_1 = 1$ and $\bar x_1 = 0$ 
as well as $x_2 = \bar x_2 = 1$, we get $\Pr[X_1 = 1] = \frac 12$.
\end{example}

\section{Probablistic generating circuits}

Given categorial variables $X_1,\dots,X_n$ with image $\{0,\dots,d-1\}$
and joint distribution $p(a_1,\dots,a_n) = \Pr[X_1 = a_1, \dots, X_n = a_n]$,
the \emph{probability generating function} is a formal polynomial in formal
variables $z_1,\dots,z_n$ defined by
\begin{align}
   G(z) & = \sum_{j_1 = 0}^{d-1} \dots \sum_{j_n = 0}^{d-1} p(j_1,\dots,j_n) z_1^{j_1} \cdots z_n^{j_n}.
\end{align}
A \emph{probabilistic generating circuit (PGC)} for a probability distribution $p$
is an arithmetic circuit that computes $G$.
While PCs compute probability mass functions, PGCs store probability distributions
as formal objects. We can feed some particular elementary event as an input into a PC and the corresponding
output is the probability of the this event. In PGCs, the probabilities are stored as coefficients.
Figure~\ref{fig:pgc:example} shows an example. %
While PCs can also model continuous distributions, PGCs can only model distributions
over categorial variables due to the way they store distributions.

\begin{figure}[t]
        \centering
        \begin{minipage}[c]{0.3\columnwidth}
        \centering
        \begin{tabular}{l|c|c}
        $\Pr$     & $X_1 = 0$ & $X_1 = 1$ \\
        \hline
        $X_2 = 0$ & $\frac 16$ &  $\frac 16$ \\
        $X_2 = 1$ & $\frac 13$ & $\frac 13$
        \end{tabular} 
        \caption{A distribution over binary random variables}
        \label{fig:distribution}
        \end{minipage}~~~~~~~\begin{minipage}[t]{0.3\columnwidth}
        \centering
       \begin{tikzpicture}[xscale = 0.35, yscale = 0.70]
            \node[shape=circle,draw=black] (1) at (0,0) {$+$};
            \node[shape=circle,draw=black] (2) at (-2,-2) {$\times$};
            \node[shape=circle,draw=black] (3) at (2,-2) {$\times$};
            \node[shape=circle,draw=black] (4) at (-4,-4) {$+$};
            \node[shape=circle,draw=black] (5) at (0,-4) {$+$};
            \node[shape=circle,draw=black] (6) at (4,-4) {$+$};
            \node[shape=circle,draw=black, minimum size = 5ex] (7) at (-5,-6) {$x_1$};
            \node[shape=circle,draw=black, minimum size = 5ex] (8) at (-2.5,-6) {$\Bar{x_1}$};
            \node[shape=circle,draw=black, minimum size = 5ex] (9) at (0.5,-6) {$x_2$};
            \node[shape=circle,draw=black, minimum size = 5ex] (10) at (3.5,-6) {$\Bar{x_2}$};
            \draw[<-] (1) edge node[above left, yshift=-5pt] {$\frac{1}{12}$} (2)
                 (1) edge node[above right, yshift=-5pt] {$ \frac{1}{6}$ } (3);
            \draw[<-] (2) edge (4) (3) edge (4) (2) edge (5) (3) edge (6);
            \draw[<-] (4) edge (7) (4) edge (8) (5) edge (10) (6) edge (9) (5) edge node[left] {$2$} (9) (6) edge node[right] {$\frac{1}{2}$} (10);           
        \end{tikzpicture}
        \caption{A PC over binary random variables, computing the distribution in Figure~\ref{fig:distribution}
        }
        \label{fig:pc:example}
    \end{minipage}~~~~~~~\begin{minipage}[t]{0.3\columnwidth}
        \centering
          \begin{tikzpicture}[xscale = 0.4, yscale = 0.7]
            \node[shape=circle,draw=black] (1) at (0,0) {$+$};
            \node[shape=circle,draw=black] (2) at (-2,-2) {$\times$};
            \node[shape=circle,draw=black] (3) at (2,-2) {$\times$};
            \node[shape=circle,draw=black] (4) at (-4,-4) {$+$};
            \node[shape=circle,draw=black] (5) at (0,-4) {$+$};
            \node[shape=circle,draw=black] (6) at (4,-4) {$+$};
            \node[shape=circle,draw=black, minimum size = 5ex] (7) at (-3,-6) {$1$};
            \node[shape=circle,draw=black, minimum size = 5ex] (8) at (0,-6) {$z_1$};
            \node[shape=circle,draw=black, minimum size = 5ex] (9) at (3,-6) {$z_2$};
            \draw[<-] (1) edge node[above left, yshift = -5pt] {$\frac{2}{3}$} (2)
                 (1) edge node[above right, yshift = -5pt] {$-1$} (3);
            \draw[<-] (2) edge (4) (2) edge (5) (3) edge (5) (3) edge (6);
            \draw[<-] (5) edge (8) (5) edge (7) (4) edge (7) (6) edge (9) (4) edge node[above, xshift = -18pt, yshift = 10pt] {$2$} (9) (6) edge node[above right, xshift = 10pt, yshift = 8pt] {$\frac{1}{2}$} (7);
        \end{tikzpicture}
        \caption{Example of a PGC computing the probability generating function of the distribution in Figure~\ref{fig:distribution}}
        \label{fig:pgc:example}
    \end{minipage}
    \end{figure}

\Citet{PGCzhang} introduce probabilistic generating circuits only for binary random variables
and showed how to perform tractable marginalization in this case. 
While binary variables are the most important categorial random variable,
categorial variables with a larger number of outputs are important, too.
Applications of tenary variables for example 
are excess losses
or labelling with abstention \citep{DBLP:conf/nips/0003S22, DBLP:conf/nips/MhammediGG19, 
DBLP:conf/icml/ThulasidasanBBC19}.

The work by \citet{PGCzhang} led to further work on PGCs: \citet{10.5555/3625834.3625912} 
designed a faster marginalization procedure for PGCs, while \citet{DBLP:conf/icml/Blaser23}
constructed a strongly Rayleigh distribution that cannot be represented by small PGCs.

\section{Our results}

\Citet{PGCzhang} show that given a PGC over binary random variables, marginalization is tractable.
This raises the natural question whether this is also possible for categorial variables attaining more
than two values. We give a negative answer to this question (under standard complexity theoretic assumptions) in Section~\ref{sec:quarternaryhard}.
If marginalization for PGCs over quarternary random variables can be done in polynomial time,
then $\NP = \DP$. For ternary variables, we get a similar result, however, we need to
be able to marginalize over subsets of the image. For the result on quarternary variables,
it is sufficient to marginalize over the whole image set.

\citet{PGCzhang} show that \emph{determinantal point processes} (DPPs) can be represented
by PCGs. On the other hand, it is not clear whether decomposable and smooth PCs (which support tractable marginalization) can represent DPPs. This question is related to proving set-multilinear lower bounds for the determinant in algebraic complexity, see \cite{ramprasad} for an overview.  
See also \cite{DBLP:conf/uai/ZhangHB20} for further limitations of representing DPPs by PCs. 
As our second main result, we here prove that the additional power of the PGC by \citet{PGCzhang} does not come from the fact that they use a different representation but from the fact that one allows for negative constants. 
In particular we prove that every PGC over binary random variables can be transformed into a 
nonmonotone PC, which computes the corresponding
probability mass function and allows for tractable marginalization
(Section~\ref{sec:simulation}).

Third, we prove in Section~\ref{sec:tractable}
that nonmonotone PCs computing set-multilinear polynomials are more general
than PGCs in the sense that they support tractable marginalization over
categorial variables of an arbitrary image size. 
Since computing a set-multilinear polynomial and computing a probability distribution are 
semantic properties (i.e. they cannot be directly inferred from the structure of the PC),
we also ask the question whether checking these properties is hard?
It turns out that the first one can be checked in randomized polynomial time,
while the second property is hard to decide.

In Section~\ref{sec:composition}, we design basic compositional operations,
which preserve the property that a nonmonotone PC computes a probability 
distribution. The first two are the well-known weighted sum
and multiplication (when the domains are disjoint). The third one
is a variant of the hierarchical composition introduced by
\citet{PGCzhang} for PGCs.

Finally, in Section~\ref{sec:DPP}, we discuss the relation
between nonmonotone PCs and DPPs.
PGCs and nonmonotone PCs were designed to subsume monotone PCs and DPPs. 
It is well known that nonmonotone PCs are strictly stronger then monotone ones.
DPPs can only compute distributions with negative correlations.
In this sense, nonmonotone PCs are strictly stronger than DPPs. 
However, once we allow to combine DPPs with 
simple compositional operations like affine projections,
we show that the question whether nonomonotone PCs are more powerful than DPPs
will be very hard to answer (Theorem~\ref{VFinDPP}). It will imply a separation between
algebraic formulas and circuits, a question that has been open for 
decades, see \cite{burgisser_act}.
While it is well-known that we can write a polynomial computed by a formula as 
a determinant, see e.g.~\cite{burgisser_act}, the crucial point here is that the variables
of a DPP only appear on the diagonal. We here prove that every
polynomial computed by a formula can be written as an affine projection
of a DPP of size linear in the formula size.
That means separating DPPs and nonmonotone PCs implies a separation of algebraic 
formulas and circuits. This result can even be strengthened
to so-called algebraic branching programs instead of formulas
(Theorem~\ref{VBPinDPP}).

\section{Graphs and matchings}

In the next two sections, we briefly review definitions and results from graph theory
and computational complexity, which will be needed for our hardness results.

A graph $G$ is called \emph{bipartite}, if we can partition its nodes into two set $U$ and $V$
such that all edges have one node in $U$ and the other node in $V$. 
When we write $G = (U \cup V,E)$ we mean that $G$ is a bipartite graph with bipartition $(U, V)$. 
We will typically call these nodes $u_1,\dots,u_m$ and $v_1,\dots,v_n$.
The \emph{degree} of a node is the number of edges that it is incident to.
A graph is called \emph{regular} if every node has the same degree. 
It is \emph{$d$-regular} if it is regular and the degree of every node is $d$.
The \emph{neighbours} of a node $u$ are the nodes that share an edge with $u$.

\begin{example} Figure~\ref{fig:bip:1} shows a $3$-regular bipartite graph with four nodes
on each side. The neighbours of $u_1$ are $v_1,v_2,v_3$ but not $v_4$.
\end{example}

$M \subseteq E$ is a \emph{matching} if each node of $U$ and $V$ appears in at most 
one edge of $M$. $M$ is called \emph{perfect}, if every node is in exactly one edge. 
If a bipartite graph has a perfect matching, then necessarily $|U| = |V|$.
The size $|M|$ of a matching is the number of edges in it. If $M$ is perfect,
then $|M| = |U| = |V|$. The set of all matchings of $G$ is denoted
by $\cM(G)$ and the set of all perfect matchings by $\PM(G)$.
The number of all matchings and perfect matchings is denoted
by $\#\cM(G)$ and $\#\PM(G)$, respectively.

\begin{example}\
Figure~\ref{fig:bip:2} shows a perfect matching in the graph from 
Figure~\ref{fig:bip:1}.
\end{example}

\begin{figure}
\centering
\begin{minipage}[t]{0.45\columnwidth}
\centering
\begin{tikzpicture}[xscale=1,yscale=0.8]
\tikzstyle{every node}=[circle, draw=black, thick, fill=black, inner sep=2pt];
\node (u1) at (0,0) [circle, draw, fill=bleudefrance, label=left:{$u_1$}] {};
\node (u2) at (0,-1) [circle, draw, fill=bleudefrance, label=left:{$u_2$}] {};
\node (u3) at (0,-2) [circle, draw, fill=bleudefrance, label=left:{$u_3$}] {};
\node (u4) at (0,-3) [circle, draw, fill=bleudefrance, label=left:{$u_4$}] {};
\node (v1) at (2,0) [circle, draw, fill=bananamania, label=right:{$v_1$}] {};
\node (v2) at (2,-1) [circle, draw, fill=bananamania, label=right:{$v_2$}] {};
\node (v3) at (2,-2) [circle, draw, fill=bananamania, label=right:{$v_3$}] {};
\node (v4) at (2,-3) [circle, draw, fill=bananamania, label=right:{$v_4$}] {};
\draw (u1) edge (v1) (u1) edge (v2) (u1) edge (v3)
      (u2) edge (v1) (u2) edge (v2) (u2) edge (v4)
      (u3) edge (v1) (u3) edge (v3) (u3) edge (v4)
      (u4) edge (v2) (u4) edge (v3) (u4) edge (v4);
\end{tikzpicture}
\caption{A $3$-regular bipartite graph with bipartition $U = \{u_1,u_2,u_3,u_4\}$ and 
$V = \{v_1,v_2,v_3,v_4\}$. \label{fig:bip:1}}
\end{minipage}~~~~\begin{minipage}[t]{0.45\columnwidth}
\centering
\begin{tikzpicture}[xscale=1,yscale=0.8]
\tikzstyle{every node}=[circle, draw=black, thick, fill=black, inner sep=2pt];
\node (u1) at (0,0) [circle, draw, fill=bleudefrance, label=left:{$u_1$}] {};
\node (u2) at (0,-1) [circle, draw, fill=bleudefrance, label=left:{$u_2$}] {};
\node (u3) at (0,-2) [circle, draw, fill=bleudefrance, label=left:{$u_3$}] {};
\node (u4) at (0,-3) [circle, draw, fill=bleudefrance, label=left:{$u_4$}] {};
\node (v1) at (2,0) [circle, draw, fill=bananamania, label=right:{$v_1$}] {};
\node (v2) at (2,-1) [circle, draw, fill=bananamania, label=right:{$v_2$}] {};
\node (v3) at (2,-2) [circle, draw, fill=bananamania, label=right:{$v_3$}] {};
\node (v4) at (2,-3) [circle, draw, fill=bananamania, label=right:{$v_4$}] {};
\draw (u1) edge[very thick] (v1) (u1) edge[draw = gray] (v2) (u1) edge[draw = gray] (v3)
      (u2) edge[draw = gray] (v1) (u2) edge[draw = gray] (v2) (u2) edge[very thick] (v4)
      (u3) edge[draw = gray] (v1) (u3) edge[very thick] (v3) (u3) edge[draw = gray] (v4)
      (u4) edge[very thick] (v2) (u4) edge[draw = gray] (v3) (u4) edge[draw = gray] (v4);
\end{tikzpicture}
\caption{The thick edges form a perfect matching. Any subset of it forms a matching.\label{fig:bip:2}}
\end{minipage}
\end{figure}

\section{Complexity theory basics}

We give some background information on the complexity classes 
and results used in this paper.
We refer to \cite{papadimitriou, arora2009computational} for further explanations
and proofs of the well-known definitions and theorems in this section.
The important fact for this paper is that counting perfect matchings in
bipartite graphs is hard.

Deciding whether a formula $\phi$ in conjunctive normal form (CNF) has
a satisfying assignment is the defining problem of the famous class $\NP$.
If we instead want to count the number of satisfying assignments,
we get a problem which is complete for the class $\sharpP$ defined by Valiant.
Obviously, when you can count the number of satisfying assignments,
then you can decide whether there is at least one, therefore, 
$\sharpP$ is a ``harder'' class than $\NP$. It turns out that some problems
become $\sharpP$-hard when considered as a counting problem
while their decision versions are easy. Perfect matchings
in bipartite graphs is such an example: There are efficient algorithms
for the decision problem, but the counting version is hard.

\begin{theorem}[Valiant] Counting perfect matchings in bipartite graphs
is $\sharpP$-complete under Turing reductions.
\end{theorem}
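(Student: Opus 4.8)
The plan is to prove the two directions of $\sharpP$-completeness separately, the easy membership part and then Valiant's hardness reduction.

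\emph{Membership.} A perfect matching of a bipartite graph on $2n$ vertices is a set of $n$ edges, hence an object of polynomial size, and checking whether a given edge set is a perfect matching takes polynomial time. Thus $\#\PM(G)$ is exactly the number of accepting computations of the nondeterministic polynomial-time machine that guesses an edge set and verifies it, which witnesses membership in $\sharpP$.

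\emph{Hardness.} I would first record the standard dictionary between matchings and the permanent: for a bipartite graph $G = (U \cup V, E)$ with $|U| = |V| = n$, the number $\#\PM(G)$ equals $\per(A_G)$ where $A_G$ is the $n \times n$ $0$/$1$ biadjacency matrix; conversely, the permanent of a nonnegative integer matrix can be realized, after replacing each large entry by a small $0$/$1$ gadget and subdividing multi-edges, as $\#\PM$ of a simple bipartite graph of polynomial size. So it suffices to show that computing $\per$ of a $0$/$1$ matrix is $\sharpP$-hard, and for this I would Turing-reduce the problem of counting satisfying assignments of a $3$-CNF formula $\phi$, which is $\sharpP$-complete (essentially by Cook–Levin, as already noted in the excerpt), to it.

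The core is Valiant's gadget construction, best carried out in the language of weighted cycle covers of a directed graph $H$, so that the permanent of the weight matrix $B$ of $H$ equals the total weight of all cycle covers of $H$. For each variable one builds a small ``choice'' gadget admitting exactly two cycle covers, one encoding $\mathrm{true}$ and one encoding $\mathrm{false}$; for each clause a gadget whose cycle covers correspond to the satisfying literal patterns of that clause; and these are glued by Valiant's XOR/interchange gadget, whose defining property is that, summed over all cycle covers, any global choice that is inconsistent between a variable occurrence and a clause contributes total weight $0$, while every consistent, satisfying global choice contributes a fixed nonzero weight $c = c(m)$. This yields an integer matrix $B$ (with entries such as $-1,2,3$) satisfying $\per(B) = c \cdot (\text{number of satisfying assignments of }\phi)$.

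Finally I would clean up the weights. Since $|\per(B)| \le n!\cdot(\max_{i,j}|B_{ij}|)^n =: M$, computing $\per(B) \bmod q$ for $q := 2M+1$ loses no information; replacing each $-1$ entry by $q-1$ makes all entries nonnegative, and each entry of value $w$ is then replaced by a standard $O(\log w)$-vertex $0$/$1$ gadget realizing the multiplier $w$ exactly (binary-expansion chain). After subdividing to remove multi-edges one obtains a simple bipartite graph $G'$ of size $\poly(|\phi|)$ with $\#\PM(G') \equiv c \cdot \#\phi \pmod q$; since $c$ and $q$ are known and $c \cdot \#\phi < q$, a single oracle call recovers $\#\phi$. (A Turing reduction also allows interpolation over several moduli instead, if one prefers cleaner bookkeeping of the gadget weights.) The main obstacle is exactly the gadget design: verifying that the XOR/interchange gadget makes the weights of \emph{all} globally inconsistent cycle covers cancel while leaving each satisfying assignment with the same nonzero multiplicity — this is where negative weights are indispensable and where essentially all of the case analysis lies.
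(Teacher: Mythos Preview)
The paper does not actually prove this theorem: it is stated as background in the ``Complexity theory basics'' section, with an explicit pointer to \cite{papadimitriou, arora2009computational} for proofs of the well-known results in that section. So there is no argument in the paper to compare your proposal against.

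That said, your sketch is a faithful and essentially correct outline of Valiant's original proof. Membership is exactly as you state. For hardness you follow the standard route: identify $\#\PM$ with the $0/1$ permanent, reduce $\#3\mathrm{SAT}$ to the integer permanent via the cycle-cover interpretation with variable, clause, and XOR/interchange gadgets, then eliminate negative and large entries by working modulo a large $q$ and expanding each entry into a $0/1$ gadget of logarithmic size. One small bookkeeping point: after the $0/1$-gadget expansion, $\#\PM(G')$ itself can be astronomically large; what you recover is $\#\PM(G') \bmod q$, and it is this residue that equals $c\cdot\#\phi$ (since $|c\cdot\#\phi|\le M$ and $q=2M+1$). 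Your parenthetical about interpolation over several moduli is also a legitimate alternative. As you correctly flag, the only genuinely delicate step is the cancellation analysis for the XOR gadget, which your outline does not carry out but which is exactly where the textbook references would be invoked.
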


Above, a problem $A$ is Turing reducible to a problem $B$ if there is a polynomial
time deterministic Turing machine that solves $A$ having oracle access to $B$.
This means, that an efficient algorithm for $B$ would yield an efficient algorithm for
$A$ and in this sense, $A$ is easier than $B$. 

Note that $\sharpP$ is a class of functions, not of languages. The 
class $\FP$ is the class of all functions computable in polynomial time.
It relates to $\sharpP$ like $\DP$ relates to $\NP$.
If $\sharpP = \FP$, then $\NP = \DP$.

\section{PGCs do not support efficient marginalization beyond binary variables}

\label{sec:quarternaryhard}

\begin{theorem}\label{NoEMforQuartenary}
    Efficient marginalization over PGCs involving quarternary random variables implies that 
    $\sharpP = \FP$, and in particular, $\NP = \DP$.
\end{theorem}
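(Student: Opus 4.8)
The plan is to reduce the $\sharpP$-complete problem of counting perfect matchings in a $3$-regular bipartite graph to a single marginalization query on a small PGC over quaternary variables. Fix a $3$-regular bipartite graph $G = (U \cup V, E)$ with $U = \{u_1,\dots,u_n\}$ and $V = \{v_1,\dots,v_n\}$, and consider the polynomial
\[
  Q(z_1,\dots,z_n) \;=\; \prod_{j=1}^{n}\Bigl(\,\sum_{i\,:\,(u_i,v_j)\in E} z_i\,\Bigr).
\]
Expanding the product, $Q = \sum_{f} \prod_{j} z_{f(j)}$ where $f$ ranges over the functions assigning to each $v_j$ a neighbour $u_{f(j)}$; the monomial $\prod_j z_{f(j)}$ equals $z_1 z_2\cdots z_n$ exactly when $f$ is a bijection, i.e.\ when $\{(u_{f(j)},v_j)\}_j$ is a perfect matching, and each perfect matching arises from exactly one such $f$. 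Hence the coefficient of $z_1\cdots z_n$ in $Q$ is $\#\PM(G)$.

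Next I would turn $\tfrac{1}{3^n}Q$ into a genuine PGC. Because $G$ is $3$-regular, each $z_i$ occurs in exactly $3$ of the $n$ sums, so every variable has degree at most $3$ in $Q$; all coefficients of $Q$ are nonnegative integers; and $Q(1,\dots,1) = 3^n$. Thus $G_p := \tfrac{1}{3^n}Q$ is the probability generating function of a distribution $p$ over categorial variables $X_1,\dots,X_n$ with image $\{0,1,2,3\}$, and it is computed by a PGC of size $O(n)$: one input node per $z_i$, a sum node $S_j = \tfrac13\sum_{i\,:\,(u_i,v_j)\in E} z_i$ (weight $\tfrac13$ on each incoming edge) for each $j$, and a single top product node outputting $\prod_j S_j = \tfrac{1}{3^n}Q = G_p$. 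This circuit is constructible from $G$ in polynomial time. Here the $3$-regularity is essential: a larger maximum degree would force exponents exceeding $3$ and hence a larger image, which is exactly why quaternary is the threshold below which nothing goes wrong.

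Finally, $\Pr[X_1 = 1,\dots,X_n = 1] = [z_1\cdots z_n]\,G_p = \tfrac{1}{3^n}\#\PM(G)$, so an algorithm that marginalizes over quaternary PGCs in polynomial time answers this query — it is the case in which nothing is marginalized out, i.e.\ evaluating the probability mass function, already a nontrivial coefficient-extraction task for PGCs and a special case of the marginalization procedure of \citet{PGCzhang} in the binary setting; if one insists on a strictly proper marginal one can instead multiply $G_p$ by the factor $\tfrac14(1 + z_0 + z_0^2 + z_0^3)$ of a fresh uniform quaternary variable $X_0$ and marginalize $X_0$ out. Multiplying the answer by $3^n$ yields $\#\PM(G)$, placing perfect-matching counting in $\FP$, hence $\sharpP = \FP$ and therefore $\NP = \DP$. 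I expect the main obstacle to be exactly the point that forces the word ``quaternary'': making the constructed polynomial a legitimate probability generating function, which requires the degree bound and hence that we source hardness from perfect-matching counting restricted to $3$-regular (or bounded-degree) bipartite graphs rather than from the general version stated above — this restricted $\sharpP$-completeness is classical, and if preferred a standard degree-reduction gadget reduces the general case to bounded degree.
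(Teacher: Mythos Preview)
Your argument is correct and follows the same essential idea as the paper: build a small PGC from a $3$-regular bipartite graph so that the coefficient of the square-free monomial counts perfect matchings, with $3$-regularity forcing the individual degrees to be at most $3$ and hence the variables to be (at most) quaternary.

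The one noteworthy difference is how the two constructions make the query an actual \emph{marginalization}. You use only $n$ variables $z_1,\dots,z_n$, so the target quantity $\Pr[X_1=1,\dots,X_n=1]$ is a pure PMF evaluation; you patch this by tacking on a dummy uniform variable to marginalize out. The paper instead introduces, in addition to vertex variables $V_j$, an edge variable $E_{i,j}$ for every edge, and takes $f=\prod_{i}\sum_{j\in N(i)} E_{i,j}V_j$. Then computing $\Pr[V_1=1,\dots,V_n=1]$ genuinely marginalizes out the $3n$ edge variables over their full image, which is precisely the kind of query the paper wants to show is hard (``marginalization over the whole image set''). Your version is a bit leaner; the paper's version more directly targets the stated inference task without the dummy-variable workaround. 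Both are valid.
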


\begin{proof}
    Counting perfect matchings in $3$-regular bipartite graphs is a $\sharpP$-hard
    problem, as proved by \citet{DAGUM1992283}. 
    Given a $3$-regular bipartite graph $G=(U\cup V,E)$ with $|U| = |V| = n$, we will
    construct a PGC $C$ of size polynomial in $n$ (in fact linear in $n$) 
    over quarternary random variables such that a certain marginal probability 
    is the number of perfect matchings in $G$.

    For every vertex $v_i \in V$, we define a formal variable $V_i$. Similarly, for every edge $e_{i,j}$ representing an edge between $u_i \in U$ and $v_j \in V$, we define a binary random variable $E_{i,j}$. For every vertex $u_i \in U$, let $N(i)$ denote the set of indices of its neighbours. Since $G$ is $3$-regular, $|N(i)| = 3$.
    In particular, let $N(i,1), N(i,2), N(i,3)$ denote the indices of the three neighbours of $u_i$. We now define the polynomial $f$ as 
    \begin{equation} \label{eq:quarternary}
      f(V_1,...,V_n,E_{1,N(1,1)},..., E_{n, N(n,3)}) = \prod_{i=1}^n \sum_{j \in N(i)} E_{i,j} V_j
    \end{equation}
    The right hand side of the equation is a depth-$2$ circuit of size $s = O(n)$, 
    as each sum has only three terms due to $3$-regularity. Thus $f$ has a linear sized circuit.
    
    Furthermore, $f$ is a polynomial of $\deg(f)= 2n$. Each $E_{i,j}$ appears with degree $1$ 
    in $f$, since $E_{i,j}$ only appears in the $i$th factor of the outer product in (\ref{eq:quarternary}). Each $V_j$ appears with degree $3$ in $f$, since each $V_j$ appears in three factors in (\ref{eq:quarternary}) due to $3$-regularity of $G$. The total number of variables appearing in $f$ is $n + 3n = 4n$. 

    Since each coefficient of the product in the inner sums is $1$, the coefficients
    of $f$ are all nonnegative. Thus $f$ can viewed as the probability generating function of the (unnormalized) joint probability distribution of $4n$ quaternary random variables.
    The normalization constant is the sum of all coefficients of $f$, which is just
    $f(1,\dots,1)$. Hence, the normalization constant can be efficiently computed using the circuit for $f$. In this particular case, it is even easier. By looking at the righthand side of (\ref{eq:quarternary}), we see that the normalization constant is $3^n$.
    Thus, $\hat f := f / 3^n$ is a (normalized) probability distribution over quaternary variables and is computed by a linear sized PGC. 

    We now view $\hat f$ as a polynomial in $V_1,\dots,V_n$ with coefficients being polynomials
    in $E_{i,N(i,j)}$, $i = 1,\dots,n$, $j = 1,2,3$. 
    Let $h(E_{1,N(1,1)},..., E_{n, N(n,3)})$ be the coefficient of $V_1 V_2 \dots V_n$.
    We claim that the number of monomials in $h$ gives us the number of perfect matchings in $G$. Any perfect matching in $G$ would be of the form $(u_1,v_{k_1})$, $(u_2,v_{k_2})$, ..., $(u_n,v_{k_n})$, such that $k_i \neq k_j$ for $i \neq j$, and $\{k_1,...,k_n\} = [n]$. But then the monomial $E_{1,k_1}\cdot...\cdot E_{n,k_n}$ would be present in $h$. For the converse, let $E_{1,a_1}\cdot...\cdot E_{n,a_n}$ be some monomial in $h$. Clearly, $a_i\neq a_j$ for $i\neq j$, as each $V_i$ only
    occurs in one factor in (\ref{eq:quarternary}). Further, the coefficient of $E_{1,a_1}\cdot...\cdot E_{n,a_n}$ in $h$ is nonzero only if all the edges $(u_1,v_{a_1})$, $(u_2,v_{a_2}),\dots,(u_n,v_{a_n})$ were present in $G$. Thus, any such monomial would denote a perfect matching in $G$. Hence, the number of monomials in $h$ and the number of perfect matchings in $G$ are equal. 
    
    Suppose we can efficiently marginalize in PGCs over quarternary random variables. 
    Then we can compute $\Pr[V_1 = 1, V_2 = 1,\dots,V_n = 1]$ efficiently.
    (By abuse of notation $V_i$ also denotes the random variable corresponding to the formal variable $V_i$.) However, this probability is nothing but $h(1,\dots,1)$. 
    Each monomial in $h$ has the coefficient $1/3^n$. Thus 
    \[
      \Pr[V_1 = 1,\dots,V_n = 1] = h(1,\dots,1) = \frac{\#\PM(G)}{3^n}.
    \]
     Using marginalization we can efficiently get the number of perfect matchings in a $3$-regular bipartite graph $G$. However, as computing the number of perfect matching in $3$-regular bipartite graphs is a $\sharpP$-complete problem, we get $\sharpP = \FP$.
     This implies $\NP = \DP$.
\end{proof}

\Citet{PGCzhang} show that we can marginalize efficiently over PGCs on binary random variables. Combined with the above result, it naturally raises the question whether there exists efficient marginalization for PGC involving ternary random variables. While we are not able to answer the question fully here, we do show that ``selective'' marginalization over such distributions should not be possible under standard complexity theoretic assumptions. In order to prove this result, we will first introduce some definitions.

\begin{definition}[Selective Marginalization]\label{selmarg}
Let $X_1,\dots,X_n$ be $k$-nary random variables. Let 
$V_i \subseteq \{0,1,\dots,k-1\}$, $1 \le i \le n$. 
$\Pr[X_1 \in V_1,\dots,X_n \in V_n]$ is called a selective marginal probability.
\end{definition}

Let $f(z_1,\dots,z_n)$ be the corresponding probability generating polynomial:
    \[
    f(z_1,...,z_n) = \sum_{s = (s_1,...,s_n) \in \{ 0,1,...,k-1 \}^n}c_s \cdot \prod_{i=1}^n z_i^{s_i}.
    \]
Then we want to compute    
    $\sum_{s_1 \in V_{1},..., s_n \in V_{n}} c_{s}$.

\begin{theorem}\label{NoSelEMforTernary}
    Efficient selective marginalization over PGCs involving ternary random variables implies $\sharpP = \FP$, and in particular, $\NP = \DP$.
\end{theorem}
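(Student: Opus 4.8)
The plan is to reduce counting perfect matchings in $3$-regular bipartite graphs --- a $\sharpP$-hard problem by \citet{DAGUM1992283}, already used in the proof of Theorem~\ref{NoEMforQuartenary} --- to selective marginalization over a linear-size ternary PGC. The proof mirrors the structure of Theorem~\ref{NoEMforQuartenary}, but the single degree-$3$ ``vertex variable'' that forced quaternary alphabets there is replaced by a small gadget built from three variables of degree at most $2$, together with a box constraint that is a \emph{proper} sub-box of the full image --- which is exactly where selective (rather than plain) marginalization enters.

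Concretely, given a $3$-regular bipartite graph $G=(U\cup V,E)$ with $|U|=|V|=n$, I would introduce for every $v_j\in V$ three ternary formal variables $V_j^{(1)},V_j^{(2)},V_j^{(3)}$, fix an arbitrary ordering $e_1^j,e_2^j,e_3^j$ of the three edges at $v_j$, and assign to the edge $e_k^j$ the monomial $\mu(e_k^j):=\prod_{\ell\neq k}V_j^{(\ell)}$ (the product of the two ``other'' variables at $v_j$). The PGC computes $f=\prod_{i=1}^n\big(\sum_{j\in N(i)}\mu(u_i,v_j)\big)$; by $3$-regularity each inner sum has three terms, so this is a depth-$2$ circuit of size $O(n)$. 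I would then verify the routine properties: each $V_j^{(k)}$ occurs in exactly two edge-monomials at $v_j$, so $\deg_{V_j^{(k)}}(f)\le 2$ and $f$ is a legal probability generating polynomial over ternary variables; all its coefficients are nonnegative; and $f(\mathbf{1})=3^n$, so $\hat f:=f/3^n$ is an honest distribution computed by a linear-size PGC.

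The core of the argument is a local gadget computation. A monomial of $f$ corresponds to a choice function $\sigma$ with $\sigma(i)\in N(i)$; writing $S_j$ for the set of chosen edges incident to $v_j$, the $v_j$-block of that monomial is $\prod_{e\in S_j}\mu(e)$, and one checks that the eight possible values of $S_j$ yield the eight \emph{distinct} exponent vectors $(0,0,0),(1,1,0),(0,1,1),(1,0,1),(1,2,1),(2,1,1),(1,1,2),(2,2,2)$ --- of which exactly the first four have all coordinates $\le 1$, corresponding precisely to $|S_j|\le 1$. Distinctness of the eight vectors lets the $v_j$-blocks recover all of $S_1,\dots,S_n$ and hence $\sigma$, so distinct $\sigma$ give distinct monomials, each with coefficient $1$. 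Therefore $\Pr[\,\forall j,k:\;V_j^{(k)}\in\{0,1\}\,]$ equals $3^{-n}$ times the number of $\sigma$ with $|S_j|\le 1$ for all $j$; since $\sum_j|S_j|=n$, each $|S_j|\le 1$ forces each $|S_j|=1$, i.e.\ $\sigma$ is a bijection, i.e.\ a perfect matching. Thus this selective marginal is $\#\PM(G)/3^n$, and efficient selective marginalization over ternary PGCs would compute $\#\PM(G)$, giving $\sharpP=\FP$ and in particular $\NP=\DP$.

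The step I expect to be the main obstacle is finding the gadget itself: one has to realize ``at most one of the three incident edges is chosen'' as a product (box) constraint on the images of ternary variables. This is impossible with a single variable (it would need degree $3$, i.e.\ a quaternary alphabet, as in Theorem~\ref{NoEMforQuartenary}) and impossible with two variables (a short case analysis, using that the product of all monomials at $v_j$ must stay within degree $2$, rules it out), and it only works with the ``triangle-complement'' encoding on three variables, using that $\{0,1\}^3$ is a proper sub-box of $\{0,1,2\}^3$. Once this encoding is in place, everything else --- the legality and normalization of $\hat f$, the linear circuit size, and the counting identity --- is bookkeeping.
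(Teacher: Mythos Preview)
Your proof is correct, but it takes a genuinely different route from the paper's.

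The paper does \emph{not} stay with $3$-regular bipartite graphs; instead it switches to $(2,3)$-regular bipartite graphs $G=(L\cup R,E)$ (left degree $2$, right degree $3$) and introduces a single ternary variable $U_i$ per left vertex. The polynomial is simply
\[
   f(U_1,\dots,U_m,\lambda)=\prod_{j=1}^{n}\Bigl(\lambda+\sum_{i\in N(j)}U_i\Bigr),
\]
which is automatically of individual degree $\le 2$ because each $u_i$ has degree $2$. Selectively marginalising each $U_i$ over $\{0,1\}$ picks out the multilinear monomials, which are in bijection with \emph{all} matchings of $G$, and the resulting sum is $\RMatch(G,\lambda)/(\lambda+3)^n$; the paper then invokes the $\sharpP$-hardness of evaluating $\RMatch(G,\lambda)$ for $\lambda\notin\{-3,-2,-1,0\}$ due to \citet{23reg}.

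What each approach buys: the paper's construction is structurally simpler --- one variable per vertex, no gadget --- but it pays for this by appealing to a second, less standard hardness result ($\RMatch$ evaluation on $(2,3)$-regular graphs). Your construction is more intricate --- three variables per right vertex with the ``triangle-complement'' encoding $\mu(e_k^j)=\prod_{\ell\ne k}V_j^{(\ell)}$ --- but it reduces directly from the same Dagum--Luby source already used in Theorem~\ref{NoEMforQuartenary}, so the paper as a whole would need one fewer external hardness citation. Your local analysis (the eight distinct exponent vectors, with exactly the four in $\{0,1\}^3$ corresponding to $|S_j|\le 1$, and the pigeonhole step forcing $|S_j|=1$) is clean and correct; it is essentially a degree-$2$ substitute for the single degree-$3$ variable that drove the quaternary proof.
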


The proof of the baove theorem will use the following definitions. 
\begin{definition}[$(2,3)$-regular bipartite graph]\label{23regbp}
    Let $G=(L\cup R,E)$ be a bipartite graph such that $|L| = \frac 32 n$ and $|R| = n$. $G$ is a $(2,3)$-regular bipartite graph, if $\forall u \in L$, $\deg(u) = 2$ and $\forall v \in R$, $\deg(v)=3$.
\end{definition}

$n$ has to be even in a $(2,3)$-regular graph and the left-hand size is necessarily $\frac 32 n$ because of the degree constraints. 

\begin{definition}\label{rmatchpoly}
    Let $G = (L\cup R,E)$ be a $(2,3)$-regular bipartite graph. Let $\cM(G)$ be the set of matchings in $G$. Let $M$ be any matching in $\cM(G)$ and $\unsat_R(M)$ denote the set of unmatched $R$-vertices in $M$. Define $\RMatch(G,x)$ as follows:
      \[ 
        \RMatch(G,x) = \sum_{M \in \mathcal{M}(G)} x^{|\unsat_R(M)|}
      \]
\end{definition}

For each fixed $\lambda \in \Qset$, the graph polynomial $\RMatch(G,x)$ defines a mapping $G \mapsto \RMatch(G,\lambda)$, which takes graphs to rational numbers.
\Citet{23reg} show that
the evaluation of the polynomial $\RMatch(G,\lambda)$ is known to be $\sharpP$-hard 
for all $\lambda \in \mathbb{Q}\setminus \{-3,-2,-1,0\}$.

\begin{proof}  (of Theorem~\ref{NoSelEMforTernary}) 
     Consider any $(2,3)$-regular bipartite graph $G=(L \cup R,E)$ with $|L| = m = \frac32 n, |R| = n$. For every vertex $u_i \in L$, we define a ternary random variable $U_i$.  For every vertex $v_j \in R$, let $N(j)$ denote the set of indices of its neighbours. In particular, $N(j,1), N(j,2), N(j,3)$ denote the indices of the three neighbours of $v_j$. We now define the polynomial $f$ as 
     \begin{equation} \label{eq:ternary}
        f(U_1,...,U_m,\lambda) = \prod_{j=1}^n ( \lambda + \sum_{i \in N(j)} U_i) 
     \end{equation}
     where $\lambda$ is an arbitrary positive number, in particular 
     it is not in $\{-3,-2,-1,0\}$. Further, notice that every vertex $u_i \in L$ has degree $ = 2$, hence the degree of any $U_i$ in any monomial of $f$ cannot exceed $2$. We also point out that the coefficient of any monomial of $f$ is 
     nonnegative.
     
     The right-hand side of (\ref{eq:ternary}) is a depth-$2$ circuit of size $s = O(n + m) = O(n)$. Furthermore, $f$ is a polynomial of $\deg(f)=n$. Using proof ideas similar to the proof of Theorem~\ref{NoEMforQuartenary}, we transform $f$ into a probability generating polynomial represented by a PGC by computing $f(1,1,...,\lambda) = (\lambda + 3)^n$. Thus, we set 
     \[
       \hat f(U_1,...,U_m,\lambda) = \frac{f(U_1,...,U_m,\lambda)}{(\lambda + 3)^n}
     \] 
     which has a circuit of size $O(n)$ and represents a probability generating polynomial.
     
     We now selective marginalise $\hat f$ over $V_{i} = \{0,1\}$ for each $i$, as per Definition~\ref{selmarg}. This means that we sum the coefficients of all
     monomials of $\hat f$ that are multilinear. We now claim that the multilinear
     monomials of $\hat f$ stand in one-to-one correspondance with matchings in $G$: Any matching in $G$ would match all $v_j$ for $j \in S \subseteq [n]$ for some $S$, and leave all $v_j$ for $ j \in [n] \setminus S$ unmatched. Let the corresponding matching be $(u_{t_1},v_{s_1})$, ..., $(u_{t_k},v_{s_k})$. Clearly, $u_{t_i} \neq u_{t_j}$ for $i \neq j$. Thus, in the polynomial $f$, this would correspond to the monomial $\lambda^{n-|S|}\cdot \prod_{i \in S}U_{t_i}$. Since all the exponents of $U_i$ are in $\{0,1\}$ for all $i \in [m]$, we see that this term would also be present. Now consider any set of edges $A \subseteq E$ which do not correspond to a matching in $G$. This implies that either some $u_i$ for $i \in [m]$, or some $v_j$ for $j \in [n]$ has been matched more than once. By definition of $f$, for any $j \in [n]$, every monomial of $f$ either takes one neighbour of $v_j$ or leaves $v_j$ unmatched. Further, since we are looking at selective marginalisation with $U_i$ being either $0$ or $1$, this implies none of the $u_i$'s can be matched twice either. Thus there cannot be any monomial that corresponds to $A$. Therefore, the result of the selective marginalization is 
     \[
        \frac{\sum_{S \in \cM(G)} \lambda^{n-|S|}} {(\lambda + 3)^n} = 
           \frac{\RMatch(G,\lambda)} {(\lambda + 3)^n} .
     \]
     Thus, using \emph{selective} marginalisation we can efficiently determine the $\RMatch$ polynomial for a given $\lambda$ in a $(2,3)$-regular bipartite graph $G$. 
     However, computing this is a $\sharpP$-complete problem. Recall that the PGC for $\hat f$ has size $O(m)$. Thus, efficient selective marginalization over ternary PGCs would imply that $\sharpP = \FP$.
\end{proof}

\section{PGCs over binary variables can be simulated by nonmonotone PCs}

\label{sec:simulation}

\begin{theorem}{\label{PGCinPCThm}}
    PGCs over binary variables can be simulated by nonmonotone PCs with only polynomial overhead
    in size.
\end{theorem}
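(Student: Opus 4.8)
The plan is to convert a PGC computing the probability generating polynomial $G(z_1,\dots,z_n)=\sum_{S\subseteq[n]}c_S\prod_{i\in S}z_i$ into a nonmonotone PC computing the probability mass function in its polynomial form
\[
 M(x_1,\bar x_1,\dots,x_n,\bar x_n)=\sum_{S\subseteq[n]}c_S\prod_{i\in S}x_i\prod_{i\notin S}\bar x_i .
\]
Setting $x_i=1,\bar x_i=0$ for $i\in S$ and $x_i=0,\bar x_i=1$ for $i\notin S$ makes $M$ return $c_S=\Pr[X_i=1\ (i\in S),\,X_j=0\ (j\notin S)]$, so a PC computing $M$ simulates the PGC; and since each monomial of $M$ contains exactly one variable from each pair $\{x_i,\bar x_i\}$, the polynomial $M$ is set-multilinear and homogeneous of degree $n$. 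The key identity is
\[
 M(x,\bar x)=\Big(\prod_{i=1}^n\bar x_i\Big)\cdot G\!\left(\frac{x_1}{\bar x_1},\dots,\frac{x_n}{\bar x_n}\right),
\]
an equality of rational functions whose right-hand side is actually a polynomial precisely because $G$ is multilinear: in $G(x/\bar x)$ each $\bar x_i$ appears with exponent at least $-1$, and the factor $\prod_i\bar x_i$ cancels exactly those poles.

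The construction then has three steps. First, build a small circuit \emph{with divisions} for $M$: take the PGC for $G$, feed its $i$-th input with a gate computing $x_i/\bar x_i$, and multiply the output by $\prod_i\bar x_i$; this has size $O(\mathrm{size}(C)+n)$. Second, eliminate the divisions using Strassen's division-elimination (see \cite{burgisser_act}), which turns a size-$s$ circuit with divisions computing a polynomial of degree $d$ into a division-free circuit of size polynomial in $s$ and $d$, as long as the circuit is defined at some point. Here $d=\deg M=n$, so the blow-up is polynomial; to obtain a legal base point, first substitute $x_i\mapsto x_i+1$ and $\bar x_i\mapsto\bar x_i+1$, turning the division gates into $(x_i+1)/(\bar x_i+1)$ whose denominators equal $1$ at the origin (the PGC itself is division-free, hence defined everywhere). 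Concretely this amounts to homogenising and carrying the degree-$0$ through degree-$n$ homogeneous parts of every gate, propagating a division $h=f/q$ by $h_j=q_0^{-1}\big(f_j-\sum_{1\le l\le j}q_l h_{j-l}\big)$ with $q_0=1$; the output is a polynomial-size division-free circuit for the degree-$\le n$ truncation of the shifted expression, i.e.\ for $M(x+1,\bar x+1)$. Third, substitute $x_i\mapsto x_i-1$ and $\bar x_i\mapsto\bar x_i-1$, which is a size-preserving change at the inputs, to recover a polynomial-size circuit for $M(x,\bar x)$.

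The circuit so obtained is a PC with the right input labels, it computes the probability mass function of the PGC's distribution, it has polynomial size, and it is nonmonotone (the only negative constants introduced are the shifts and the alternating signs of $1/(\bar x_i+1)=\sum_j(-1)^j\bar x_i^j$, so the weights stay polynomially bounded). The part that requires the most care is the division-elimination step: one must verify that the circuit of the first step really computes $M$ as a rational function (this is where multilinearity of $G$ is used) and that the shifted circuit is genuinely defined at the origin, so that Strassen's theorem applies with degree bound $n$. Because the resulting PC computes the set-multilinear polynomial $M$, tractable marginalization follows from Section~\ref{sec:tractable}. (A more elementary but weaker alternative, avoiding Strassen's theorem, extracts $c_S$ from $G$ by univariate interpolation — substitute $z_j=0$ for $j\notin S$, $z_i=t$ for $i\in S$, and read off the coefficient of $t^{|S|}$, realised uniformly as $c_S=\sum_\ell\gamma_{\ell,|S|}\,G(\omega_\ell x_1,\dots,\omega_\ell x_n)$ evaluated at the query point — and wraps it in a degree selector $\prod_{j\neq k}\frac{(\sum_i x_i)-j}{k-j}$; this gives a polynomial-size PC that agrees with the PMF on all query points, but producing one that literally computes the set-multilinear $M$ is cleaner via division-elimination.)
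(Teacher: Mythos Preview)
Your proof is correct and follows essentially the same approach as the paper: the same key identity $M(x,\bar x)=\big(\prod_i\bar x_i\big)\,G(x_1/\bar x_1,\dots,x_n/\bar x_n)$, a small circuit with division gates built from the PGC, and Strassen's division-elimination to remove the divisions with polynomial blow-up in $s$ and $n$. You even spell out the Taylor shift and homogeneous-part bookkeeping that the paper relegates to a worked example after the proof.
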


\begin{proof}
    Let $f(z_1,...,z_n)$ be a probability generating function of $n$ binary random variables
    computed by a PGC of size $s$.
    We define the polynomial $g$ as 
    \[ 
       g(x_1,\Bar{x_1},...,x_n,\Bar{x_n}) = f(\frac{x_1}{\Bar{x_1}},\frac{x_2}{\Bar{x_2}},...,\frac{x_n}{\Bar{x_n}})\cdot \prod_{i=1}^n \Bar{x_i}.
    \]
    $g$ is indeed a polynomial, since $f$ is multilinear.
    To get a circuit for $g$, first we need to substitute all instances of $z_i$ with $\frac{x_i}{\Bar{x_i}}$ in the circuit for $f$, using the help of division gates, which we will remove later. Clearly, this gives us a circuit for $f(\frac{x_1}{\Bar{x_1}},\frac{x_2}{\Bar{x_2}},...,\frac{x_n}{\Bar{x_n}})$ with size $O(s+n)$. Furthermore, we can compute $\prod_{i=1}^n \Bar{x_i}$ in size $O(n)$, and thus we get a circuit for $g$ of size $O(s+n)$.
    
    Let $m$ be any monomial of $f(z_1,z_2,...,z_n)$. Then, $m = c_S \cdot \prod_{i\in S} z_i$
    for some $S \subseteq [n]$. Looking at the corresponding monomial $m'$ in $g(x_1,\Bar{x_1},...,x_n,\Bar{x_n})$, we see that \[m' = c_S \cdot (\prod_{i \in S} \frac{x_i}{\Bar{x_i}})\cdot (\prod_{j=1}^n \Bar{x_j}) = c_S (\prod_{i \in S} x_i) \cdot (\prod_{i \notin S} \Bar{x_i})\]
    Thus, if $z_i$ occurs in $m$, then $x_i$ occurs in $m'$. Similarly, if $z_i$ does not occur in $m$, then $\Bar{x_i}$ occurs in $m'$. Thus, in $m'$ exactly one of $x_i$ or $\Bar{x_i}$ occurs for all $i$. Hence, $m'$ is multilinear with $\deg(m') = n$. Since, $m'$ was an arbitrary monomial of $g$, this implies $g$ is both multilinear and homogenous with $\deg(g)=n$. Thus, we obtained a 
    nonmonotone PC, cf. \citep[Proposition 1]{PGCzhang}.
    
    Right now the circuit of $g$ has division gates. 
    This is a problem, since we might want to set $\bar x_i = 0$ when computing a probability.
    However using the famous result of \citet{StrassenDiv}, we can eliminate all division gates to get a circuit for $g$ of size $\poly(s+n, \deg(g)) = \poly(s,n)$. Furthermore, $g$ computes a probability distribution since $f$ represented a probability distribution. 
    Thus, starting with a PGC of size $s$, we get a nonmonotone PC computing a set-multilinear $g$ 
    of size $\poly(s,n)$.
\end{proof}

\begin{remark}
In the proof by Strassen, every arithmetic operation in the original circuit is replaced by
a corresponding operation on polynomials of degree $n$. Therefore, 
the exact upper bound on the size of the new circuit is $O(s \cdot n \log n \log \log n)$
if we use the fast polynomial multiplication by \citet{DBLP:journals/acta/CantorK91}
or $O(s \cdot n)$ if we use interpolation.
\end{remark}

\begin{example}
The following small example explains Strassen's construction: 
Suppose we have a PGC that computes the following polynomial: $f(z_1,z_2) = 0.6z_1z_2 + 0.4z_1$.
We will demonstrate the conversion of the polynomial into a polynomial computed by the corresponding nonmonotone PC. 
\begin{itemize}[noitemsep,nolistsep]
        \item We first replace each $z_i$ with $\frac{x_i}{\Bar{x_i}}$, and then get $g = f(\frac{x_1}{\Bar{x_1}}, \frac{x_2}{\Bar{x_2}}) \cdot \Bar{x_1}\Bar{x_2} = 0.6x_1x_2 + 0.4x_1\Bar{x_2}$. %
        However, the new circuit contains
        the two divisions $\frac{x_1}{\Bar{x_1}}$ and $\frac{x_2}{\Bar{x_2}}$, which we have to remove.        
        \item The idea by \citet{StrassenDiv} is to expand $\frac 1{\bar x_1}$ as a formal power series. Since we are computing a polynomial in the end, it is enough to work with finite approximations. We can only invert a power series if it has a nonzero constant term. Therefore, we first have to perform a Taylor shift, which will be reverted in the end.
        In our case, the shift $\Bar{x_i}$ to $1-\Bar{x_i}$ works. We get
        $f(\frac{x_1}{1 - \Bar{x_1}}, \frac{x_2}{1 - \Bar{x_2}}) \cdot (1 - \Bar{x_1})(1 - \Bar{x_2}) = 0.6x_1x_2 + 0.4x_1 (1 - \Bar{x_2})$.
        \item The inverse of $1 - x$ as a formal power series is the geometric series
        $\frac 1{1-x} = \sum_{i = 0}^{\infty} x^i$. Since in this example, we compute a multilinear function of degree two, it is enough to work with order-one approximations and 
        replace $\frac{1}{1 - \Bar{x_i}}$ by $1 + \Bar{x_i}$. We get
        \begin{align}
        \lefteqn{f(x_1 (1 + \Bar{x_1}), x_2 (1 + \Bar{x_2})) \cdot (1 - \Bar{x_1})(1 - \Bar{x_2})} & \notag \\
          & =  (0.6 x_1 x_2 ( 1 + \Bar{x_1}) (1+ \Bar{x_2}) + 0.4 x_1 ( 1+ \Bar{x_1}) ) 
             (1-\Bar{x_1})(1 - \Bar{x_2}) \notag \\
          & = 0.6 x_1 x_2 + 0.4 x_1 ( 1 - \bar x_2) ~+~ \text{higher degree terms}
                 \label{eq:strassen:1}.
        \end{align}
        \item To compute only the terms up to degree two and remove the unwanted higher degree terms, we compute the homogeneous parts separately. This is the part where we incur the blowup in the circuit size. 
        \item Now, finally to get back to $g$, we need to invert the substitutions that we made earlier. Hence, replacing $\Bar{x_i}$ with $1-\Bar{x_i}$ for $i \in \{1,2\}$ in (\ref{eq:strassen:1}) after removing the higher degree terms gives us the polynomial $0.6x_1x_2 +0.4x_1\Bar{x_2}$ which is in fact $g$.
    \end{itemize}
\end{example}

\section{Nonmonotone PCs computing set-multilinear polynomials 
support tractable marginalization}

\label{sec:tractable}

We consider categorial random variables $X_1,\dots,X_n$, w.l.o.g.\
taking values in $I = \{0,1,\dots,d-1\}$. With each random variable $X_i$,
we associate $d$ indeterminates $z_{i,0},\dots,z_{i,d-1}$. A probability distribution
for $X_i$ given by $\Pr[ X_i = \delta] = \alpha_{\delta}$, $0 \le \delta < d$,
can be modelled by the linear polynomial 
$\ell = \sum_{\delta = 0}^{d-1} \alpha_\delta z_{i,\delta}$. By setting 
$z_{i,\delta}$ to $1$ and all other $z_{i,\delta'}$ for $\delta' \not= \delta$
to $0$, that is, evaluating $p_i$ at the unit vector $e_\delta$,
we get $\ell (e_\delta) = \alpha_{\delta}$. More generally,
if we want to compute $\Pr[X_i \in S]$ for some set
$S \subseteq \{0,\dots,d-1\}$, we can compute this by evaluating $\ell$
at $v$, where $v_i = 1$ if $i \in S$ and $v_i = 0$ otherwise.

\begin{definition} \label{def:setmulti}
Let $X$ be a set of variables and $Y_1,\dots,Y_t$ be a partition of $X$.
A polynomial $p$ in variables $X$ is called \emph{set-multilinear} with respect
to the above partition, if every monomial of $p$ contains exactly one
variable from each set $Y_\tau$, $1 \le \tau \le t$, with degree $1$.
In particular, $p$ is homogeneous of degree $t$.
\end{definition}

\begin{example}
A decomposable and smooth PC over binary variables computes a set-multilinear polynomial with the parts given by $\{x_i, \bar x_i\}$,
$1 \le i \le n$.
\end{example}

Given a decomposable and smooth PC over categorial random variables $X_1,\dots,X_n$, we can model every input distribution by a linear form as described above. The corresponding polynomial
will be set-multilinear with the parts of the partition being
$\{z_{i,0},\dots,z_{i,d-1}\}$, $1 \le i \le n$. However, when the PC is nonmonotone,
that is, we allow for negative weights, then we can compute a set-multilinear polynomial,
even if the PC is not decomposable. It turns out that this is sufficient for
performing marginalization.

\begin{theorem}
Let $C$ be a nonmonotone PC of size $s$ 
computing a probability distribution
over categorial random variables $X_1,\dots,X_n$
such that the polynomial $P$ computed by $C$ is set-multilinear with respect
to the partition $\{z_{i,0},\dots,z_{i,d-1}\}$, $1 \le i \le n$.
Let $A_1,\dots,A_n \subseteq \{0,\dots,d-1\}$. Then we can compute
$\Pr[X_1 \in A_1,\dots,X_n \in A_n]$ in time $O(s)$.
\end{theorem}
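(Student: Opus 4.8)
The plan is to exploit the fact that a set-multilinear polynomial stores its coefficients exactly where a single evaluation can read them off. First I would observe that, since $P$ is set-multilinear with respect to the partition $\{z_{i,0},\dots,z_{i,d-1}\}$, every monomial of $P$ has the form $c_{\delta_1,\dots,\delta_n}\prod_{i=1}^n z_{i,\delta_i}$ for some tuple $(\delta_1,\dots,\delta_n)\in\{0,\dots,d-1\}^n$, and conversely each tuple indexes at most one monomial. Because $C$ computes the probability distribution $p$ and $P$ is set-multilinear, evaluating $P$ at the unit-vector assignment that selects $z_{i,\delta_i}$ for each $i$ picks out precisely the monomial $\prod_i z_{i,\delta_i}$, so $c_{\delta_1,\dots,\delta_n} = \Pr[X_1 = \delta_1,\dots,X_n = \delta_n]$. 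Hence the quantity we want is $\Pr[X_1\in A_1,\dots,X_n\in A_n] = \sum_{\delta_1\in A_1,\dots,\delta_n\in A_n} c_{\delta_1,\dots,\delta_n}$, a sum of coefficients of $P$.

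The second step is to extract this sum of coefficients by one evaluation of the circuit $C$. For each $i$ let $v^{(i)}\in\{0,1\}^d$ be the indicator vector of $A_i$, i.e. $v^{(i)}_\delta = 1$ if $\delta\in A_i$ and $v^{(i)}_\delta = 0$ otherwise, and evaluate $P$ at the point obtained by setting $z_{i,\delta} := v^{(i)}_\delta$ for all $i,\delta$. A single monomial $\prod_{i=1}^n z_{i,\delta_i}$ then evaluates to $\prod_{i=1}^n v^{(i)}_{\delta_i}$, which equals $1$ if $\delta_i\in A_i$ for every $i$ and $0$ otherwise. Summing over the monomials of $P$ — which is legitimate precisely because $P$ is set-multilinear, so every monomial has the shape above with no cross terms or higher powers — the value of $P$ at this point is exactly $\sum_{\delta_1\in A_1,\dots,\delta_n\in A_n} c_{\delta_1,\dots,\delta_n}$, which is the desired selective marginal in the sense of Definition~\ref{selmarg}. (Taking $v^{(i)} = e_{\delta_i}$ recovers elementary-event probabilities, and $v^{(i)} = (1,\dots,1)$ marginalizes $X_i$ out, so the statement is the common generalization.)

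Finally, evaluating the arithmetic circuit $C$ on a fixed numerical input is done bottom-up, spending one weighted sum or one product per gate, hence $O(s)$ arithmetic operations. I would stress here what is and is not used: we never invoke smoothness or decomposability, and in particular it is irrelevant that the intermediate gates of $C$ may compute polynomials that are not set-multilinear (or not even multilinear), since we push a concrete assignment through the circuit and read off the single number it outputs — the semantic hypothesis that the \emph{output} polynomial $P$ is set-multilinear is all that the correctness argument needs. I do not expect a serious obstacle: the only point requiring genuine care is the claim of the second paragraph, that a $0/1$ evaluation of a set-multilinear polynomial returns exactly the sum of the coefficients selected by the supports of the indicator vectors; this is the crux, and it follows directly from the monomial structure forced by Definition~\ref{def:setmulti}. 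Everything else is routine bookkeeping about circuit evaluation.
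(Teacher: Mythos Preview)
Your proposal is correct and follows essentially the same argument as the paper: write $P$ in its set-multilinear monomial expansion, identify the coefficients with the elementary probabilities, and evaluate the circuit at the $0/1$ indicator vectors of the sets $A_i$ to pick out exactly the desired sum of coefficients. Your added remark that neither decomposability nor smoothness is needed (only the semantic set-multilinearity of the output) is in the spirit of the surrounding discussion but is not part of the paper's proof itself.
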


\begin{proof}
Note that
\begin{equation}
  P = \sum_{j_1 = 0}^{d-1} \dots \sum_{j_n = 0}^{d-1} \Pr[X_1 = j_1, \dots, X_n = j_n] z_{1,j_1} \cdots z_{n,j_n}.  \label{lastequation}
\end{equation}
Consider the elementary event $X_1 = a_1,\dots,X_n = a_n$. Define an input vector $e$ for $P$ by
\[
  e_{i,j} = \begin{cases}
               1 & \text{if $j = a_i$}, \\
               0 & \text{otherwise}
            \end{cases}
\]
for $1 \le i \le n$, $0 \le j \le d-1$. By (\ref{lastequation}),
$P(e) = \Pr[X_1 = a_1,\dots,X_n = a_n]$. From all monomials in $P$, only 
$z_{1,a_1} \dots z_{n,a_n}$ evaluates to $1$ under $e$
and all others evaluate to $0$. Now to compute $\Pr[X_1 \in A_1,\dots,X_n \in A_n]$,
we simply evaluate at the point 
\[
  v_{i,j} = \begin{cases}
               1 & \text{if $j \in A_i$}, \\
               0 & \text{otherwise}.
            \end{cases}
\]
We claim that $P(v) = \Pr[X_1 \in A_1,\dots,X_n \in A_n]$: A monomial  
$z_{1,j_1} \cdots z_{n,j_n}$ evaluated at $v$ becomes $1$ iff $j_i \in A_i$ for all 
$1 \le i \le n$. Otherwise, it evaluates to $0$. Thus
\begin{align*}
  P(v) & = \sum_{j_1 \in A_1} \dots \sum_{j_n \in A_n} \Pr[X_1 = j_1, \dots, X_n = j_n] \cdot 1\\
       & = \Pr[X_1 \in A_1,\dots,X_n \in A_n],  
\end{align*}
which proves the claim.
\end{proof}

\begin{remark}
This gives also an alternative marginalization algorithm for PGCs over $n$ binary variables.
We convert it into a equivalent nonmonotone PC computing a set-multilinear polynomial.
The total running time will be $O(s\cdot n)$. The extra factor $n$ comes from the conversion
to a PC. The running time matches the one by \citet{10.5555/3625834.3625912}.
\end{remark}

While being decomposable or being smooth is a \emph{syntactic} property, that is, it is a property of the circuit and can be checked efficiently, computing a set-multilinear polynomial and computing a probability distribution are \emph{semantic} properties\footnote{This is only true for nonmonotone PCs, for monotone PCs these conditions are equivalent, see \cite{vergari2020probabilistic}.}, 
that is, properties of the polynomial computed by the circuit. We can of course compute the coefficients of the polynomial to check whether it is set-multilinear or evaluate the circuit at all inputs to see whether it is a probability distribution, but this requires exponential time.
Can we check these properties nevertheless efficiently? It turns out that the first property is efficiently checkable while the second is most likely not.

\begin{proposition} \label{prob:testmultilin}
Testing whether a nonmonotone PC computes a set-multilinear polynomial 
with respect to a given partition
can be done in randomized polynomial time.
\end{proposition}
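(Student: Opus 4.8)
The plan is to reduce set-multilinearity testing to polynomial identity testing (PIT) and then invoke the Schwartz--Zippel lemma, exploiting the fact that a nonmonotone PC is an arithmetic circuit and hence that arithmetic operations on it are cheap. The key observation is that a polynomial $P$ in the variables $\{z_{i,j} : 1 \le i \le n,\ 0 \le j \le d-1\}$ is set-multilinear with respect to the partition $\{z_{i,0},\dots,z_{i,d-1}\}$, $1 \le i \le n$, if and only if $P$ equals its own ``set-multilinear projection'': the sum of exactly those monomials that contain exactly one variable from each block with degree one. So I would first build, from the given circuit $C$ of size $s$, a second circuit $C'$ that computes this projection $\Pi(P)$, and then test whether $C$ and $C'$ compute the same polynomial by checking that $C - C'$ is identically zero.

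First I would construct $C'$. The standard way to extract a prescribed homogeneous/multilinear component is to introduce, for each block $i$, a fresh scalar parameter $t_i$, replace every input $z_{i,j}$ by $t_i z_{i,j}$, and observe that the resulting polynomial, viewed in the $t_i$'s, has the coefficient of $t_1 t_2 \cdots t_n$ equal exactly to the part of $P$ that is multilinear and uses exactly one variable per block --- i.e. the set-multilinear part $\Pi(P)$. This coefficient can be isolated by interpolation in each $t_i$ over $d+1$ points (since each variable $z_{i,j}$, and hence $t_i$, can appear to degree at most $d$ after the substitution only if... more carefully: $t_i$ appears to a degree bounded by the $z_{i,\ast}$-degree of $P$, which is at most $\deg P \le s$), giving a circuit $C'$ for $\Pi(P)$ of size $\poly(s,n)$. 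One must be slightly careful to bound the individual degrees; since $\deg P \le 2^{O(s)}$ a priori is too weak, I would instead note $\deg P \le s$ for a circuit of size $s$ without products of large fan-in, or more robustly use $O(\log \deg P)$ doubling, so the interpolation cost stays polynomial; this bookkeeping is the one routine-but-fiddly point.

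Next, $P$ is set-multilinear iff $C$ and $C'$ compute the same polynomial, i.e. iff the circuit $D := C - C'$ (of size $\poly(s,n)$) computes the zero polynomial. By the Schwartz--Zippel lemma, evaluating $D$ at a uniformly random point from a grid of size at least, say, $2\deg(D)$ in each coordinate detects non-zeroness with probability at least $1/2$; the evaluation itself is a single pass over the circuit, costing $\poly(s,n)$ arithmetic operations, and a fixed number of repetitions drives the error down. This yields a randomized polynomial-time test, proving the proposition.

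The main obstacle is purely the degree bookkeeping in the interpolation step: making sure the number of evaluation points needed to extract the coefficient of $t_1\cdots t_n$ (and the field/grid size needed for Schwartz--Zippel on $D$) stays polynomial, given that an arithmetic circuit of size $s$ can in principle compute a polynomial of exponential degree. The clean fix is to observe that a PC, by design, has product gates whose output degree is the sum of the input degrees and that in our setting $P$ is meant to be homogeneous of degree $n$ (or at least of polynomially bounded degree), so we may either assume a degree bound is given as part of the input or syntactically enforce one; with that in hand every step is a routine arithmetic-circuit manipulation.
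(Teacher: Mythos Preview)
Your high-level reduction to PIT via a circuit for the set-multilinear projection $\Pi(P)$ is natural, but the construction of $C'$ has a real gap that is \emph{not} the degree bookkeeping you flag. Even granting a polynomial degree bound $D$ on $P$, extracting the coefficient of $t_1\cdots t_n$ by interpolating successively in the $n$ auxiliary variables $t_1,\dots,t_n$ multiplies the circuit size by roughly $D+1$ at each of the $n$ steps, so the circuit you obtain for $\Pi(P)$ has size on the order of $(D+1)^n\cdot s$ --- exponential in the number $n$ of blocks, not $\poly(s,n)$ as you assert. In fact one should not expect any generic small circuit for $\Pi(P)$ from a circuit for $P$: take $P=\prod_{i=1}^n\sum_{j=1}^n a_{ij}\,y_j$ with the singleton partition $\{y_1\},\dots,\{y_n\}$; then $\Pi(P)=\per(A)\cdot y_1\cdots y_n$, so your transformation, if it worked, would place the permanent in $\VP$.

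The paper sidesteps this by never materializing $\Pi(P)$. It tests the two halves of the definition directly with $O(nd^2)$ separate Schwartz--Zippel instances, each on a circuit of size $O(s)$: first, for every block $i$, zero out that block and check that the result is identically zero (detecting any monomial that misses block $i$); second, for every block $i$ and every pair $z_{i,j},z_{i,j'}$ in it, substitute random values for all remaining variables and check that the resulting bivariate polynomial has no monomial of total degree $\ge 2$ (detecting any monomial that uses block $i$ more than once). The moral is that \emph{deciding} whether $P=\Pi(P)$ is strictly easier than \emph{computing} $\Pi(P)$, and your argument conflates the two. If you want to salvage your framework, one route is: first run the per-block degree-$\le 1$ check (as in the paper's second step); once that passes, $\Pi(P)$ coincides with the degree-$n$ homogeneous component of $P$, which you \emph{can} isolate with a single auxiliary variable and univariate interpolation over $n+1$ points, genuinely in $\poly(s,n)$.
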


\begin{proof} %
Let $P$ be the polynomial computed by a given circuit $C$ and let
$\{z_{i,0},\dots,z_{i,d-1}\}$, $1 \le i \le n$, be the parts of the partition.

First we check whether each monomial of $P$ depends on at least one variable 
of each part. To this aim we iterate over all $i$ and set
$z_{i,0} = \dots = z_{i,d-1} = 0$. If the resulting polynomial is nonzero,
then there is at least one monomial in $P$ that does not contain a variable
from $\{z_{i,0},\dots,z_{i,d-1}\}$. Testing whether a polynomial
is nonzero can be done in randomized polynomial time
by the Schwartz-Zippel-Lemma, see \cite{arora2009computational}.

Second we need to check that each monomial depends on at most one variable
from each part and has degree $\le 1$ in this variable. 
For each pair $z_{i,j},z_{i,j'}$ with $j \not= j'$, we replace all other
variables by random values (from a polynomially large set) obtaining
a polynomial $P_{i,j,j'}$. 
Then again by the Schwartz-Zippel-Lemma, if the polynomial $P$ has a monomial containing $z_{i,j}$ and $z_{i,j'}$ or $z_{i,j}^2$ or $z_{i,j'}^2$,
then also the new polynomial $P_{i,j,j'}$ has such a monomial with high probability. 
Since $P_{i,j,j'}$ has only two
variables, we can expand it completely and check whether is has a monomial containing $z_{i,j}$ and $z_{i,j'}$ or $z_{i,j}^2$ or $z_{i,j'}^2$.

The overall procedure is polynomial time.
\end{proof}

\begin{proposition}
Testing whether a nonmonotone PC computes a probability distribution is
$\NP$-hard.
\end{proposition}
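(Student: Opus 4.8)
The plan is to reduce from an $\NP$-hard problem --- the natural candidate is the complement of a satisfiability-style problem, but it is cleaner to reduce from deciding whether a certain polynomial has a particular property that encodes unsatisfiability. Concretely, I would encode a CNF formula $\phi$ (or better, following the pattern of Theorem~\ref{NoEMforQuartenary}, a combinatorial gadget like a $3$-regular bipartite graph) into a small nonmonotone PC, arranged so that the circuit computes a probability distribution \emph{if and only if} the underlying instance has no solution (e.g. the graph has no perfect matching, or the formula is unsatisfiable). Since the circuit is of polynomial size and constructed in polynomial time, a polynomial-time decision procedure for ``computes a probability distribution'' would solve the $\NP$-hard (indeed $\coNP$-hard, which suffices for $\NP$-hardness under Turing/Karp reductions up to complementation) problem.

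The key construction: start from a polynomial $f$ with nonnegative coefficients that, as in the earlier hardness proofs, is a product of small sums, so it has a linear-size circuit; its coefficients count combinatorial objects (matchings, satisfying assignments). Normalize $f$ by $f(1,\dots,1)$, which is easy to compute from the circuit, to obtain $\hat f$ whose coefficients sum to $1$ and are all nonnegative --- so $\hat f$ by itself \emph{always} computes a probability distribution. The trick is to then \emph{subtract} a second small circuit $g$ that isolates exactly the ``bad'' coefficient (the one counting solutions), in such a way that $\hat f - g$ has all-nonnegative coefficients summing to $1$ precisely when that bad coefficient is zero, i.e. when there is no solution. One has to be careful that subtracting $g$ does not spoil the ``sums to $1$'' condition; this is arranged by having $g$ redistribute mass rather than destroy it --- e.g. $g$ moves the offending coefficient onto a fresh monomial with a negative sign while adding a compensating positive term elsewhere, so the total is preserved but positivity fails exactly when the offending count is nonzero.

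First I would fix the source problem and gadget (I expect ``perfect matching exists'' in a bipartite graph, whose \emph{complement} is $\coNP$-hard to decide in the relevant encoding, or directly an $\NP$-complete variant so that the non-distribution case is the ``yes'' instance); second, write down the product-form circuit for $f$ and the companion circuit $g$ extracting the target coefficient; third, verify the two syntactic facts --- polynomial size and polynomial-time constructibility; fourth, prove the equivalence ``$C$ computes a probability distribution $\iff$ the instance has the desired property'' by analyzing coefficients monomial-by-monomial, exactly in the style of the coefficient-counting arguments in the proofs of Theorems~\ref{NoEMforQuartenary} and~\ref{NoSelEMforTernary}.

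The main obstacle will be the gadget design in step two: ensuring that subtracting the auxiliary circuit $g$ flips \emph{exactly one} relevant inequality (nonnegativity of a single coefficient) while leaving the normalization ($\sum$ of coefficients $=1$) and all other coefficients untouched. If $g$ is not carefully balanced, one ends up either changing the sum of coefficients (so the circuit fails to be a distribution for trivial reasons regardless of the instance) or introducing other negative coefficients (so hardness of the instance is not faithfully reflected). Getting this gadget to be both correct and of polynomial size is the crux; once it is in place, the reduction and the coefficient analysis are routine.
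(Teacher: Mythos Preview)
Your plan is to build the reduction from scratch, but the paper takes a much shorter route: it simply invokes the known $\NP$-hardness of deciding whether a PGC represents a valid probability distribution (Theorem~5 of \citet{10.5555/3625834.3625912}) and then applies the PGC-to-PC transformation of Theorem~\ref{PGCinPCThm}. Since that transformation has polynomial blowup and the resulting nonmonotone PC computes a probability distribution if and only if the original PGC did, hardness transfers immediately. So the paper's argument is essentially two lines.

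Your direct approach could be made to work, but as written it has two soft spots. First, your tentative source problem ``does a bipartite graph have a perfect matching'' is in $\DP$, so it cannot serve as the hardness source; you do hedge toward an $\NP$-complete variant or CNF-SAT, which is the right instinct, but the gadget you sketch (product of small sums, then subtract a circuit that isolates a single ``bad'' coefficient) is tailored to the matching picture and would need to be reworked. Second, your parenthetical ``$\coNP$-hard, which suffices for $\NP$-hardness under Turing/Karp reductions up to complementation'' is not correct for Karp reductions: $\coNP$-hardness under Karp reductions does not imply $\NP$-hardness under Karp reductions unless $\NP = \coNP$. You would need either to orient the reduction so that the \emph{yes} instances of an $\NP$-complete problem map to \emph{not-a-distribution}, or to explicitly work with Turing reductions.

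In short, the paper sidesteps all of the gadget engineering you anticipate by leaning on the cited result and the already-proved simulation theorem; your from-scratch construction is a legitimate alternative but is both more laborious and, as sketched, not yet pinned down on the choice of source problem or the reduction type.
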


The proof of the above proposition essentially follows from the hardness result by
\citet[Theorem 5]{10.5555/3625834.3625912} combined with our transformation of 
PGCs to PCs (Theorem~\ref{PGCinPCThm}).

\section{Compositional operations for nonmonotone PCs}

\label{sec:composition}

Since it is hard to determine whether a nonmonotone PC computes a probability distribution
we here present three compositional operations for nonmonotone PCs that preserve
the property that the PC computes a probability distribution and computes
a set-multilinear polynomial. The first two are the well-known multiplication and mixing operations. The third one is more interesting, it is a variant
of the hierarchical composition for PGC introduced by \citet{PGCzhang}.

Let $X_1,\dots,X_n$ be categorial random variables with image
$\Delta = \{0,\dots,d-1\}$. 
Let $A, B \subseteq \{1,\dots,n\}$.
Let $C, D$ be two nonmonotone PCs computing joint probability distributions $f$ and $g$
for $X_A = (X_i)_{i \in A}$ and $X_B = (X_j)_{j \in B}$, resp., as set-multilinear 
polynomials in the variables $z_{i,j}$, $1 \le i \le n$, $0 \le j \le d-1$.
That is
\[
  f(a) = \sum_{a \in \Delta^{|A|}} \alpha_a \prod_{i \in A} z_{i, a_i}, \quad 
  g(b) = \sum_{b \in \Delta^{|B|}} \beta_b \prod_{j \in B} z_{j, b_j},
\]
where $\alpha_a = \Pr[X_A = a]$ and $\beta_b = \Pr[X_b = b]$.
Let $s$ and $t$ be the sizes of $C$ and $D$ respectively.

We want to construct a mixture of the two distributions. To this aim, we can extend $f$ to a probability distribution
on $X_{A \cup B}$ by
\[
   f(a,b')  =  \left(\sum_{a \in \Delta^{|A|}} \alpha_a \prod_{i \in A} z_{i, a_i}  \right) \prod_{j \in B \setminus A} \frac 1d (z_{j,0} + \dots z_{j,d-1}),
\]
where $b'$ corresponds to the variables $X_{B \setminus A}$.
In the same way, we can extend $g$. This is a kind of ``smoothening'' operation,
which ensures that both distributions have the same domain.

\begin{proposition} \label{prop:mix}
There is a nonmonotone PC of size $O(s + t +nd)$ computing the 
mixture $\alpha f + (1-\alpha) g$ on $X_{A \cup B}$  for any $0 \le \alpha \le 1$
as a set-multilinear polynomial. 
\end{proposition}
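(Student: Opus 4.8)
The plan is to build the mixture circuit by first "smoothening" each of the two given circuits so that they range over the same variable set $X_{A\cup B}$, and then taking a weighted sum at the root. First I would handle $C$: its output polynomial $f$ is set-multilinear with respect to the parts $\{z_{i,0},\dots,z_{i,d-1}\}$ for $i\in A$. For each $j\in B\setminus A$, I introduce a new input gate computing the linear form $\frac1d(z_{j,0}+\dots+z_{j,d-1})$, which takes size $O(d)$ per index and $O((|B\setminus A|)\cdot d)=O(nd)$ in total, and I multiply the root of $C$ by the product of these $|B\setminus A|$ linear forms; this product can be formed with a single product gate (or a small subtree) of size $O(nd)$. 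The resulting circuit computes exactly the extended $f(a,b')$ displayed before the proposition. Symmetrically, I extend $D$ with the linear forms $\frac1d(z_{i,0}+\dots+z_{i,d-1})$ for $i\in A\setminus B$, at cost $O(nd)$. Finally I add one addition gate at the top with edge weights $\alpha$ and $1-\alpha$ feeding from the two extended roots. The total size is $O(s)+O(t)+O(nd)=O(s+t+nd)$, as claimed.

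Next I would verify correctness, i.e.\ that the new circuit computes $\alpha f+(1-\alpha)g$ and that this is a set-multilinear polynomial representing a probability distribution on $X_{A\cup B}$. Set-multilinearity of the extended $f$ is immediate: each monomial of $f$ already contains exactly one degree-$1$ variable from each part indexed by $A$, and multiplying by $\prod_{j\in B\setminus A}\frac1d(z_{j,0}+\dots+z_{j,d-1})$ appends exactly one degree-$1$ variable from each part indexed by $B\setminus A$; hence every monomial of the product has exactly one variable from each part indexed by $A\cup B$. The extended $g$ is set-multilinear by the same argument. A weighted sum of set-multilinear polynomials with the same partition is again set-multilinear, so the output polynomial is set-multilinear with respect to the partition $\{z_{i,0},\dots,z_{i,d-1}\}$, $i\in A\cup B$. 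That it represents a probability distribution follows because the extended $f$ evaluated at a unit-vector input $e$ (with $e_{i,a_i}=1$) gives $\alpha_a\cdot\prod_{j\in B\setminus A}\frac1d = \Pr[X_A=a]\cdot\frac1{d^{|B\setminus A|}}$, so the extended $f$ is precisely the distribution of $X_{A\cup B}$ under which $X_A$ has its original law and the coordinates in $B\setminus A$ are independent and uniform; in particular its coefficients are nonnegative and sum to $1$. The same holds for the extended $g$, and a convex combination $\alpha f+(1-\alpha)g$ with $0\le\alpha\le1$ of two probability distributions on $X_{A\cup B}$ is again one.

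I do not expect a serious obstacle here; this proposition is the "easy" warm-up among the compositional operations, with the hierarchical composition being the substantive one. The only point that needs a little care is bookkeeping about which variables get the uniform smoothening factor: one must make sure the two extensions end up over the \emph{same} variable set $X_{A\cup B}$ (so $f$ gets factors for $B\setminus A$ and $g$ gets factors for $A\setminus B$, while the shared indices $A\cap B$ are untouched in both), and that the new uniform inputs are genuinely fresh linear-form gates rather than reused subcircuits, so that smoothness and set-multilinearity are not accidentally broken. Once this is set up, the size bound and the two semantic claims (set-multilinear, probability distribution) follow by the short arguments above.
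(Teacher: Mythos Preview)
Your proposal is correct and follows essentially the same approach as the paper: extend $f$ and $g$ to $X_{A\cup B}$ via the uniform smoothening factors (defined just before the proposition), then take the convex combination, and note that a linear combination of set-multilinear polynomials on the same partition is set-multilinear, with the $O(nd)$ overhead coming from the smoothening. The paper's proof is a terse three sentences; you have simply spelled out the construction and added the (easy) verification that the result is a probability distribution, which the paper leaves implicit.
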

\begin{proof} %
The extended function $f(a,b')$ is set-multilinear. The same is true for $g$.
A linear combination of set-multilinear polynomials on the same partition is set-multilinear.
The additional $O(nd)$ term comes from the smoothening operation.
\end{proof}

\begin{proposition} \label{prop:mult}
If $A$ and $B$ are disjoint, then there is a nonmonotone PC of size $O(s + t)$
computing the product distribution $f(a) \cdot g(b)$ as a set-multilinear polynomial.
\end{proposition}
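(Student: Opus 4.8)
The plan is to observe that this is the most elementary of the three compositional operations, so the proof is essentially a one-line structural argument combined with a size count. Since $A$ and $B$ are disjoint, the variables appearing in $f(a)$, namely $\{z_{i,j} : i \in A, 0 \le j \le d-1\}$, are disjoint from the variables appearing in $g(b)$, namely $\{z_{j,j'} : j \in B, 0 \le j' \le d-1\}$. Hence I would build a new PC whose output gate is a single product gate whose two children are the output gates of $C$ and $D$. This adds exactly one gate to the disjoint union of the two circuits, giving size $s + t + 1 = O(s+t)$.

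Next I would verify the two semantic guarantees. First, that the resulting polynomial is set-multilinear: a monomial of the product $f \cdot g$ is a product of a monomial $\prod_{i \in A} z_{i,a_i}$ of $f$ and a monomial $\prod_{j \in B} z_{j,b_j}$ of $g$; since $A \cap B = \emptyset$, this product contains exactly one variable of degree $1$ from each part $\{z_{k,0},\dots,z_{k,d-1}\}$ for $k \in A \cup B$, and no variables from parts indexed outside $A \cup B$. So the product is set-multilinear with respect to the partition restricted to $A \cup B$ (or, if one insists on the full partition over $\{1,\dots,n\}$, one applies the same smoothening over $\{1,\dots,n\}\setminus(A\cup B)$ as in Proposition~\ref{prop:mix}, at the cost of an extra $O(nd)$; I would state the bound in the form matching the intended convention). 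Second, that it computes a probability distribution: because $X_A$ and $X_B$ are over disjoint index sets, $f(a) \cdot g(b)$ is precisely the joint distribution of $X_{A \cup B}$ under which $X_A$ and $X_B$ are independent with marginals $f$ and $g$; concretely, evaluating the product at a unit-vector encoding of an elementary event $X_{A\cup B} = (a,b)$ yields $\alpha_a \beta_b = \Pr[X_A = a]\Pr[X_B = b]$, and these are nonnegative and sum to $1$.

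There is essentially no obstacle here: disjointness of the index sets is exactly the hypothesis that makes the single product gate both legal (the resulting polynomial stays multilinear — no variable gets squared) and meaningful (it encodes an actual product distribution rather than some ill-defined object). The only thing to be careful about is the bookkeeping convention for what ``set-multilinear'' means when $A \cup B \subsetneq \{1,\dots,n\}$, which is handled uniformly by the same smoothening trick used for the mixture operation, so I would simply remark that the $O(s+t)$ bound holds for the partition over $A \cup B$ and that extending to the full partition costs an additional $O(nd)$ exactly as in Proposition~\ref{prop:mix}.
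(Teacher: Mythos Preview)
Your proposal is correct and takes essentially the same approach as the paper: the paper's proof simply expands the product $f(a)g(b)$ and observes that, because $A$ and $B$ are disjoint, each monomial $\alpha_a\beta_b \prod_{i\in A} z_{i,a_i}\prod_{j\in B} z_{j,b_j}$ is set-multilinear over $A\cup B$. Your version is in fact more thorough, since you also make explicit the circuit construction (one extra product gate), the probability-distribution check, and the smoothening caveat for $A\cup B \subsetneq \{1,\dots,n\}$, none of which the paper spells out.
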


\begin{proof} %
Since $A$ and $B$ are disjoint,
\begin{align*}
   f(a) g(b) & =  \left( \sum_{a \in \Delta^{|A|}} \alpha_a \prod_{i \in A} z_{i, a_i} \right)
                  \left( \sum_{b \in \Delta^{|B|}} \beta_b \prod_{j \in B} z_{j, b_j} \right) \\
      & =  \sum_{a \in \Delta^{|A|}, b \in \Delta^{|B|}} \alpha_a \beta_b \prod_{i \in A} z_{i, a_i} \prod_{j \in B} z_{j, b_j}.
\end{align*}
\end{proof}

Finally, we can also define a hierarchical composition. 
Let $f$ be a distribution on binary random variables
given as a set-multilinear polynomials in variables
$z_1,\dots,z_n, \bar z_1,\dots, \bar z_n$.
Let $g_1,\dots,g_n$ be distributions on $m$ disjoint $d$-ary random variables each
given as set-multilinar polynomials in variables
$y_{i,j,k}$, $1 \le i \le n$, $1 \le j \le m$, $0 \le k \le d-1$.

\begin{definition}
The hierarchical composition of $f$ and $g_1,\dots,g_n$ is defined by the
set-multilinear polynomial obtained by replacing 
$z_i$ by $g_i$ and $\bar z_i$ by $\prod_{j = 1}^m \sum_{k = 1}^d \frac{1}{d} y_{i,j,k}$,
which is the uniform distribution on the domain of $g_i$.
\end{definition}

\begin{proposition} \label{prop:hier}
The hierarchical composition is indeed a probability distribution.
It can be computed by a nonmonotone PC whose size is linear in
the sum of the sizes of the nonmonotone PCs for $f$ and $g_1,\dots,g_n$.
\end{proposition}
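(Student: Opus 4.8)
The plan is to unwind the substitution that defines the hierarchical composition into an explicit polynomial identity, check the three defining properties of a probability distribution for the resulting set-multilinear polynomial, and then implement the substitution at the circuit level by plugging the given subcircuits into the input gates of the circuit for $f$. For notation: since $f$ is a probability distribution given as a set-multilinear polynomial in $z_1,\dots,z_n,\bar z_1,\dots,\bar z_n$, we may write
\[
  f = \sum_{S \subseteq [n]} c_S \prod_{i \in S} z_i \prod_{i \notin S} \bar z_i, \qquad c_S \ge 0, \quad \sum_{S} c_S = 1,
\]
and likewise $g_i = \sum_{a} \gamma^{(i)}_a \prod_{j=1}^m y_{i,j,a_j}$ with nonnegative coefficients summing to $1$; write $u_i := \prod_{j=1}^m \bigl(\tfrac1d \sum_{k=0}^{d-1} y_{i,j,k}\bigr)$ for the uniform distribution on the domain of $g_i$. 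Since substitution into a polynomial is a ring homomorphism, the hierarchical composition is exactly
\[
  F = \sum_{S \subseteq [n]} c_S \prod_{i \in S} g_i \prod_{i \notin S} u_i .
\]

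Next I would show that $F$ is a probability distribution on the $nm$ categorial variables $Y_{i,j}$ with respect to the partition into the parts $\{y_{i,j,0},\dots,y_{i,j,d-1}\}$. For set-multilinearity: each $g_i$ and each $u_i$ is set-multilinear with respect to the parts $\{y_{i,j,k}\}_{k}$, $1 \le j \le m$; since the variable sets attached to different indices $i$ are disjoint, each product $\prod_{i\in S} g_i \prod_{i\notin S} u_i$ is set-multilinear with respect to the full partition, and a linear combination of set-multilinear polynomials on a common partition is again set-multilinear. For nonnegativity of the coefficients: $F$ is built from $f$, the $g_i$'s and the $u_i$'s using only polynomial multiplication and addition, all inputs have nonnegative coefficients, and both operations preserve this. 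For normalisation: evaluating $F$ at the all-ones point sends each $g_i$ and each $u_i$ to $1$, hence $F(\mathbf 1) = \sum_S c_S = 1$; together with set-multilinearity and nonnegativity this says precisely that the coefficients of $F$ form a probability distribution. (As a sanity check one can read off the generative process behind $F$: sample $Z \sim f$, and for each $i$ sample the block $Y_i$ from $g_i$ if $Z_i = 1$ and uniformly if $Z_i = 0$; the coefficient of $\prod_{i,j} y_{i,j,a_{i,j}}$ in $F$ is exactly the resulting mass at $a$.)

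Finally I would build the circuit. Start from the given nonmonotone PC $C_f$ for $f$, whose input gates carry the labels $z_i$ and $\bar z_i$. Identify every $z_i$-input gate with the output gate of the circuit $C_{g_i}$ — a single copy of $C_{g_i}$ feeds all occurrences of $z_i$, since fan-out is free — and replace every $\bar z_i$-input gate by a subcircuit for $u_i$ (e.g.\ $nm$ shared weighted-sum gates $\tfrac1d\sum_k y_{i,j,k}$, followed by $n$ products of $m$ of them, $O(nmd)$ gates in total). The resulting circuit is again a nonmonotone PC and computes $F$ by construction, with size $O\bigl(|C_f| + \sum_{i} |C_{g_i}| + nmd\bigr)$; since each $C_{g_i}$ is a circuit over the $md$ variables of its domain, $nmd \le \sum_i |C_{g_i}|$, and the bound simplifies to $O\bigl(|C_f| + \sum_i |C_{g_i}|\bigr)$, linear in the sum of the sizes as claimed.

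There is no genuine technical obstacle here; the proof is essentially bookkeeping. The two points that warrant a moment's care are (i) that the probability-distribution claim, a semantic property, survives the construction even though the circuit is nonmonotone — which works precisely because we substitute the nonnegative-coefficient polynomials $g_i,u_i$ into the nonnegative-coefficient polynomial $f$, so no cancellation can occur among the coefficients of $F$ — and (ii) absorbing the $O(nmd)$ overhead of the uniform gadgets into $\sum_i |C_{g_i}|$ so that the final size bound is honestly linear in the input sizes.
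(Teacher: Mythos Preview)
Your proof is correct and follows the same conceptual line as the paper: both recognise the hierarchical composition as the mixture $\sum_S c_S \prod_{i\in S} g_i \prod_{i\notin S} u_i$ of $2^n$ products of distributions on disjoint variable blocks, which is exactly what the paper's one-line proof invokes via Propositions~\ref{prop:mix} and~\ref{prop:mult}. Your explicit substitution construction for the circuit is in fact cleaner than the paper's phrasing ``repeated application of Propositions~\ref{prop:mix} and~\ref{prop:mult}'', since a literal reading of that would build $2^n$ summands and not yield the linear size bound; your direct plugging of $C_{g_i}$ and the uniform gadgets into the input gates of $C_f$ is the intended construction and your absorption of the $O(nmd)$ overhead into $\sum_i |C_{g_i}|$ is a nice touch.
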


\begin{proof} (or Proposition~\ref{prop:hier})
The hierarchical composition is a mixture of $2^n$ many $n$-fold products of distributions
on disjoint variables. The proposition follows immediately by repeated application of 
Propositions~\ref{prop:mix} and~\ref{prop:mult}.
\end{proof}

\section{Nonmonotone PCs computing set-multilinear polynomials versus DPPs}

\label{sec:DPP}

\emph{Determinantal point processes (DPPs)} are stochastic point processes whose probability distribution is characterized by a determinant of some matrix. DPPs are important 
because they are able to express negative dependencies. For the purposes of modeling real data, the class of DPPs is restricted to \emph{L-ensembles} \cite{DPPmain}, which have the interesting property:

\begin{remark}
    Let $X_1, X_2,...,X_n$ be binary random variables.
    For any $Y \subseteq [n]$, the marginal probability is given by:
    $\Pr(X_i = 1 \; | \; i \in Y) = \det(L + I_{\Bar{Y}})$
    where $L$ is the \textit{L-ensemble} matrix of the DPP, and $I_{\Bar{Y}}$ denotes the diagonal matrix with all entries indexed by elements of $Y$ as 0, and the rest as 1.
\end{remark}

As a polynomial, a DPP computes a polynomial $\det(L + \mathbf{X})$, where
$\mathbf{X}$ is a diagonal matrix with variables on the diagonal.

PGCs were designed by \citet{PGCzhang} to subsume decomposable PCs and DPPs.
It is natural to ask whether PGCs strictly subsume DPPs. The obvious
answer is ``yes'', since DPPs can only model negative dependencies. 
However, what happens if we allow simple pre- or postprocessing?
We prove that this question will be hard to answer:

\begin{theorem}[Formulas as DPPs]{\label{VFinDPP}}
    Any arithmetic formula can be represented as an affine  projection of a \emph{DPP}.
\end{theorem}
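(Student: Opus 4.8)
The plan is to mimic the classical Valiant-style construction that writes a formula as a determinant, but to be careful about where the variables land. Recall that any arithmetic formula $\phi$ of size $s$ computing a polynomial $p$ can be converted into an algebraic branching program (ABP) of size $O(s)$, i.e.\ a layered directed acyclic graph with a source $\sigma$ and sink $\tau$, edge weights that are variables or constants, and $p = \sum_{\text{paths }\sigma\to\tau}\prod(\text{edge weights})$. The standard trick is then: $p$ equals (up to sign) the determinant of $I - A$, where $A$ is the weighted adjacency matrix of the ABP with the $\tau\to\sigma$ edge added with weight $1$; more precisely $\det(I-A)=\sum_{\text{cycle covers}}\pm(\dots)$ and, because the ABP is acyclic except for that one back edge, the only surviving cycle covers correspond to a $\sigma\to\tau$ path together with self-loops on the unused vertices, so $\det(I-A)=\pm p$. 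The subtlety flagged in the paper's introduction is that in this matrix the formula variables appear as \emph{off-diagonal} entries of $A$, whereas a DPP polynomial is $\det(L+\mathbf{X})$ with $\mathbf{X}$ a diagonal matrix of \emph{fresh} variables and $L$ a constant matrix. So the real content is to move the variables onto the diagonal at the cost of an affine projection.

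The key steps, in order: (1) Build the ABP $B$ of size $m=O(s)$ from the formula, with the back edge, giving a matrix $M(\bfx)$ with $p=\pm\det M(\bfx)$ and every entry of $M$ either a constant or of the form $c\cdot x_i$ for a single variable $x_i$ and constant $c$. (2) Replace each edge carrying a variable by a short gadget so that every variable occurrence sits in its own position: subdivide an edge $(a,b)$ of weight $c x_i$ into $a\to w\to b$ where the first new edge has weight $c x_i$ and the second weight $1$, or conversely arrange that each variable-labelled edge is the unique edge into (or out of) its new vertex. This is the standard ``one variable per row/column'' normalization and only blows up the size by a constant factor. (3) Now introduce, for every vertex $w$ that is the head of a variable-edge, a fresh DPP variable $X_w$ placed on the diagonal entry $(w,w)$, and set the former off-diagonal variable entry to a constant $1$; call the resulting constant matrix (with diagonal indeterminates) $L(\mathbf{X})$, so that the DPP computes $\det(L+\mathbf{X}) = \det L(\mathbf{X})$ in the obvious notation. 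Choosing the gadget so that vertex $w$ has in-degree exactly $1$ (the variable edge) means that in any cycle cover the only way to ``use'' row $w$ off-diagonally is via that single edge, and using it costs a factor that we have normalized to $1$; not using it forces the self-loop, contributing the diagonal entry. Hence each diagonal indeterminate $X_w$ plays exactly the role the old variable $c x_i$ played, up to the constant $c$ which we fold into neighbouring edges. (4) Finally define the affine projection $\pi$ that sends $X_w \mapsto x_{i(w)}$ (the variable that edge carried) and leaves everything else fixed; then $\det(L+\pi(\mathbf{X})) = \pm p$, and a global sign $\pm 1$ and, if desired, normalization can be absorbed by one more affine adjustment. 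The size of $L$ is $O(m)=O(s)$, linear in the formula size, as claimed (and as the stronger Theorem~\ref{VBPinDPP} will assert for ABPs directly).

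The main obstacle is step (3): ensuring that after moving a variable from an off-diagonal slot to a diagonal slot the determinant is genuinely unchanged up to sign, rather than merely ``morally'' the same. The danger is that a diagonal entry $X_w$ participates in cycle covers differently from an off-diagonal entry $c x_i$ — a diagonal entry can only contribute via the length-$1$ self-loop on $w$, while an off-diagonal edge $(a,b)$ must be completed into a directed cycle using other edges. The gadget in step (2) is precisely what reconciles these: by making $w$ a degree-controlled intermediate vertex, the ``rest of the cycle through the $(a,w)$ edge'' is forced to be the continuation $w\to b$, which we normalized to weight $1$, so that the contribution of choosing the variable edge into $w$ equals the contribution of choosing the self-loop times the variable — exactly matching the DPP semantics. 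Getting this bookkeeping exactly right, including the treatment of the unique back edge so that $\det(I - A)$ isolates the source-to-sink path, is where the careful argument lies; everything else (the formula-to-ABP conversion, the size accounting, the affine projection) is routine.
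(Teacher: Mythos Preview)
Your step (3) is where the argument breaks. You place the variable $X_w$ on the self-loop of the subdivision vertex $w$ and set the incident edges to weight $1$. But in any cycle cover the self-loop at $w$ is used precisely when $w$ is \emph{not} on the $s,t$-path, i.e.\ when the original edge is \emph{not} taken; when the path does go through $w$, the contribution of that row is the constant edge weight $1$. So after your substitution $X_w\mapsto x_{i(w)}$ the determinant weights each path by the product of the variables on the edges it \emph{avoids}, not the ones it uses. For the two-vertex ABP $s\to t$ with edge weight $x$, your construction (subdivide to $s\to w\to t$, self-loop $X_w$ on $w$, edges weight $1$, back edge $t\to s$) has exactly one cycle cover, the $3$-cycle $s\to w\to t\to s$, of weight $1$; the indeterminate $X_w$ never appears. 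No affine projection of the diagonal variables can then recover $x$.

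The paper's construction avoids this by putting the variable on a self-loop of a vertex that is \emph{off} the $s,t$-path: for each edge (or formula leaf) it introduces a small side gadget---a $3$-cycle attached to the on-path vertex, with the variable on a self-loop of one of the two auxiliary nodes. When the path passes through the main vertex, the two side nodes must be covered by their self-loops, contributing the variable; when the path avoids it, the $3$-cycle covers all three with weight $1$. This inversion of roles is the missing idea in your gadget. Your high-level plan (formula $\to$ ABP $\to$ cycle covers, then push variables onto the diagonal) is exactly the paper's route, especially in Theorem~\ref{VBPinDPP}, but the diagonalization step needs the side-vertex gadget rather than a bare subdivision.
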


An arithmetic formula is an arithmetic circuit whose underlying structure is a tree.
An affine projection is a mapping that maps the variables to affine linear forms
in (a subset of) the variables.

The interpretation of the Theorem~\ref{VFinDPP} is the following. Assume there is a PGC that we cannot write as a projection of a DPP. Then, since every arithmetic formula is a projection of a DPP, we found an arithmetic circuit (the PGC) that cannot be written as an arithmetic formula.
This problem is open in algebraic complexity theory for decades, see \cite{burgisser_act}.
It is well-known that every formula is the projection
of a determinant, see \cite{burgisser_act}. However, this is not sufficient to answer
our question, since all known constructions place the variables in off-diagonal entries.

\begin{remark}
In a DPP, the matrix $L$ should be positive semi-definite. Theorem~\ref{VFinDPP} 
does not ensure this. However, we can add a large value $m$ to the diagonal elements
of the matrix constructed in Theorem~\ref{VFinDPP} and make is diagonally dominant
(and hence positive definite). The value $m$ can be subtracted again by the affine projection.
\end{remark}

Before we start with the proof, we need some combinatorial interpretation
of the determinant:
The determinant of an $n \times n$-matrix $A = (a_{i,j})$ is defined by %
\[
  \det A = \sum_{\pi \in S_n} \sgn(\pi) a_{1,\pi(1)} \dots a_{n,\pi(n)}.
\]
One can think of $A$ being the weighted adjacency matrix of directed graph.
A permutation $\pi$ then corresponds to a cycle cover in the graph: 
A \emph{cycle cover} is a collection of node-disjoint directed cycles such that
each node appears in exactly one cycle. This is nothing but the cycle decomposition
of the permutation. The sign can be written as $\sgn(\pi) = (-1)^{n + \#\text{cycles}}$.
The weight $w(C)$ of a cycle cover $C$ is the product of the weights of the edges in it multiplied
with the sign. In this way, we can write the determinant as the sum of the weights
of all cycle covers. If $G$ is the directed graph corresponding to the matrix $A$,
we denote this sum by $w(G)$. Thus $\det A = w(G)$.

\newcommand{\Cov}{\operatorname{Cov}}

\begin{proof} (of Theorem~\ref{VFinDPP})
    The proof will be by induction on the structure of the formula.
    For every subformula, we will create a corresponding directed graph $G=(V,E)$ with self-loops and a unique start vertex $s$ and end vertex $t$. Each edge of the graph will have a nonnegative weight assigned to it. An $s,t$-cover of such a graph is a set of edges consisting of a directed path from $s$ to $t$ and directed cycles such that each nodes is either in the path or in exactly one of the cycles. The weight of such an $s,t$-cover
    is the weight of the cycle cover that we get when we add the back edge $(t,s)$.
    
    For a formula computing a polynomial
    $p(x_1,\dots,x_n)$, we construct a graph $G$ such that it has the following properties:
    \begin{enumerate}[noitemsep,nolistsep]
        \item $w(G \setminus \{s,t\}) = 1$ and there is exactly one cover. 
        \item $w(G \setminus \{s\}) = w(G \setminus \{t\}) = 0$ and in both cases 
        there are no covers.
        \item $\displaystyle \sum_{C \in \Cov(G)} w(C) = p(x_1,...,x_n)$ where $\Cov(G)$ are all valid $s,t$-coverings of $G$.
     \end{enumerate}
    Furthermore the variables $x_1,\dots,x_n$ only appear on self-loops. 
    The proof is by structural induction. The base case is when the formula is a constant or a variable and in the induction step, we combine two smaller formulas using an addition or multiplication gate.
    
    \textbf{Base Case}: $p(\bfx) = v$ where $v$ can be either a field constant or a single variable.
    The corresponding graph $G$ is shown in Figure~\ref{vf2dpp_basis}.
    Note that, since $v$ could be a variable, it occurs only on a self loop. 
    
    \begin{figure}[t]
        \centering     
        \begin{minipage}{0.45\columnwidth}
        \centering
        \begin{tikzpicture}[xscale = 0.6, yscale = 0.52]
            \node[circle, draw=black, thick, fill=bleudefrance, inner sep=2pt, label=left:{$s$}] (s) at (0,0) {};
            \node[circle, draw=black, thick, fill=bananamania, inner sep=2pt, label=left:{$2$}] (2) at (0,-2) {};
            \node[circle, draw=black, thick, fill=bleudefrance, inner sep=2pt, label=left:{$t$}] (t) at (0,-4) {};
            \node[circle, draw=black, thick, fill=bananamania, inner sep=2pt, label=above:{$3$}] (3) at (2,-1) {};
            \node[circle, draw=black, thick, fill=bananamania, inner sep=2pt, label=below:{$4$}] (4) at (2,-3) {};
            \draw (s) edge[->] (2)
                (2) edge[->] (t) (2) edge[->, bend left] (3)
                (3) edge[->, bend left] (4) (3) edge[->, loop right, looseness = 10, in = 30, out = -30] node {$v$} (3)
                (4) edge[->, bend left] (2) (4) edge[->, loop right, looseness = 10, in = -30, out = 30] (4);
        \end{tikzpicture}
       \end{minipage}~~~~ \begin{minipage}{0.45\columnwidth}
       \centering
        $ \displaystyle \begin{bmatrix}
        0 & 1 & 0 & 0 & 0 \\
        0 & 0 & 1 & 0 & 1 \\
        0 & 0 & v & 1 & 0 \\
        0 & 1 & 0 & 1 & 0 \\
        1 & 0 & 0 & 0 & 0 \\
        \end{bmatrix}$
        \end{minipage}
        \caption{The graph in the base case and the corresponding adjacency matrix.
        The source $s$ and target $t$ are drawn in blue, internal nodes are drawn in yellow.
        $v$ only appears on the diagonal. Edges without label have weight $1$.}
        \label{vf2dpp_basis}
    \end{figure}
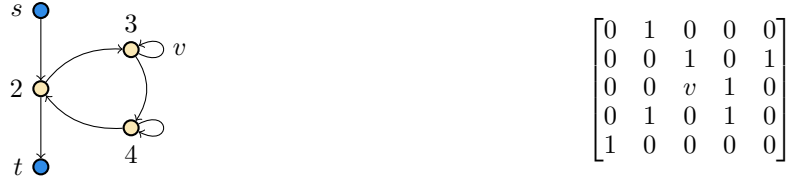
    
    \begin{enumerate}[noitemsep,nolistsep]
        \item $G' = G \setminus \{s,t\}$ leaves us with a graph $G'$ containing only a $3$-cycle of nodes $2, 3$ and $4$. The only way to cover node $2$ is to include it in a cycle with nodes $3$ and $4$, and hence the only possible cycle cover with nonzero weight, in fact weight $1$, is $(234)$. 
        \item Consider $G' = G \setminus \{s\}$. Clearly, $t$ in $G'$ is not part of any cycle. Hence, any cycle cover would need to include $t$ with a self loop. But $t$ 
        has no self loop. Hence, $w(G') = 0$. Similarly, $w(G \setminus \{t\}) = 0$.
        \item Looking at the covers $C$ of $G$ with an $s,t$-path and disjoint cycles, we notice that there is just one possible cover: the path $s,2,t$ and two self loops on vertices $3$ and $4$. The sign is $(-1)^{5 + 3} = 1$ and the weight is $v$.
    \end{enumerate}

\textbf{Inductive case 1}: $p(\bfx) = p_1(\bfx) + p_2(\bfx)$. 
    Let $G_i$ be the graph for the formula $p_i(\bfx)$ with start and end vertices being $s_i$ and $t_i$ respectively. The corresponding graph $G$ will be constructed 
    by identifying $s_1$ with $s_2$ and $t_1$ and $t_2$, see Figure~\ref{vf2dpp_addn}    
    for a schematic drawing.

    \begin{figure}[t]
    \centering
    \begin{minipage}[t]{0.45\columnwidth}
        \centering
        \begin{tikzpicture}
            \node[circle, draw=black, thick, fill=bleudefrance, inner sep=2pt, label=above:{$s=s_1=s_2$}] (s) at (0,0) {};
            \node[circle, draw=black, thick, fill=bleudefrance, inner sep=2pt, label=below:{$t=t_1=t_2$}] (t) at (0,-2) {};
            \node (G1) at (-0.6,-1) {$G_1$};
            \node (G2) at (0.6,-1) {$G_2$};
            \draw (s) edge[bend right=30] (t) (s) edge[bend right=60, looseness=1.7] (t) (s) edge[bend left=30] (t) (s) edge[bend left=60, looseness=1.7] (t);
        \end{tikzpicture}
        \caption{The construction in the case of addition. The source nodes and the target nodes of both graph are identified with each other.}
        \label{vf2dpp_addn}
    \end{minipage}~~~~~\begin{minipage}[t]{0.45\columnwidth}    
        \centering
        \begin{tikzpicture}
            \node[shape=circle,draw=black, thick, fill=bleudefrance, inner sep = 2pt, label=right:{$s=s_1$}] (s) at (0,0) {};
            \node (G1) at (0,-0.75) {$G_1$};
            \node[shape=circle,draw=black, thick, fill=bananamania, inner sep = 2pt, label=right:{$t_1$}] (t1) at (0,-1.5) {};
            \node[shape=circle,draw=black, thick, fill=bananamania, inner sep = 2pt, label=left:{$z$}] (z) at (-0.5,-2) {};
            \node[shape=circle,draw=black, thick, fill=bananamania, inner sep = 2pt, label=right:{$s_2$}] (s2) at (0,-2.5) {};
            \node (G2) at (0,-3.25) {$G_2$};
            \node[shape=circle,draw=black, thick, fill=bleudefrance, inner sep = 2pt, label=right:{$t=t_1$}] (t) at (0,-4) {};
            \draw (s) edge[bend right = 45] (t1) (s) edge[bend left = 45] (t1);
            \draw (t1) edge[->, bend right] (z) (z) edge[->, bend right] (s2) (s2) edge[->] (t1) (s2) edge[bend right = 45] (t) (s2) edge[bend left = 45] (t);
        \end{tikzpicture}
        \caption{The construction in the case of multiplication. The source $s_1$ of the $G_1$
        becomes the new source $s$ and the target $t_2$ of $G_2$ becomes the new target $t$. 
        $t_1$ and $s_2$ become internal nodes.}
        \label{vf2dpp_mult}
    \end{minipage}    
    \end{figure}
    
    \begin{enumerate}[noitemsep,nolistsep]
        \item $G' = G \setminus \{s,t\}$. Since $G_1' = G_1 \setminus \{s,t\}$ and $G_2' = G_2\setminus \{s,t\}$ are not connected, the only possible cycle covers of $G'$ are the disjoint union of cycle covers of $G_1'$ and $G_2'$. By, induction 
        their weights will be equal to $1$ and there is only one cover in each graph.  
        Hence, $w(G') = 1$ and there is only one cover of $G'$.
        \item Consider $G' = G \setminus \{s\}$. Now, there are two possible ways to cover $G'$. Either $t$ is covered in $G_1'$ or in $G_2'$. Suppose, $t$ is covered in $G_1'$. But $G_1'$ has no cycle cover containing $t$. The same is true for when 
        $t$ is covered in $G_2'$. Thus there is no cover of $G'$.
        The case for $G' = G \setminus \{t\}$ is analogous.
        \item Looking at the covers $C$ of $G$ with an $s,t$-path and disjoint cycles, we notice that either $G_1$ or $G_2$ contains the $s,t$-path. In the first case, the $s,t$-cover of
        $G$ will be disjoint unions of $s,t$-covers of $G_1$ and of cycle covers
        of $G_2 \setminus \{s,t\}$. There is only one 
        cycle cover of $G_2 \setminus \{s,t\}$ and it has weight $1$.
        Thus the sum of all such $s,t$-covers is $p_1(\bfx)$. Similarly,
        when the $s,t$-path is contained in $G_2$, we get the weight  $p_2(\bfx)$.
        Hence, $w(G) = p_1(\bfx)\cdot 1 + 1 \cdot p_2(\bfx) = p(\bfx)$.
    \end{enumerate}
    
    \textbf{Inductive case 2}: $p(\bfx) = p_1(\bfx) \cdot p_2(\bfx)$. Let $G_i$ be the graph for the formula $p_i(\bfx)$ with start and end vertices being $s_i$ and $t_i$ respectively. The corresponding graph $G$ will be constructed as shown in
    Figure~\ref{vf2dpp_mult}: The start vertex is $s = s_1$ and the end vertex is $t = t_2$.
    $t_1$ and $s_2$ are connected via a $3$-cycle.
   
    \begin{enumerate}[noitemsep,nolistsep]
        \item $G' = G \setminus \{s,t\}$. Now, any cycle cover of $G'$ will have $z$ covered in the $(z,s_2,t_1)$-cycle. This means that $t_1$ will not be a part of $G_1$ and $s_2$ will not be a part of $G_2$. Hence $G_1' = G_1 \setminus \{s_1,t_1\}$ and $G_2' = G_2\setminus \{s_2,t_2\}$. These two graph have only one cover each and their
        weight is $1$. Since we add one cycle $(z,s_2,t_1)$ and odd number of nodes,
        the overall weight of the whole cycle cover is $1$, again.

        \item $G' = G \setminus \{s\}$. Like the previous case, $z$ can only be covered
        in the $(z,s_2,t_1)$-cycle. This means that $t_1$ will not be a part of $G_1$ and $s_2$ will not be a part of $G_2$. Let $G_1' = G_1 \setminus \{s_1,t_1\}$ and $G_2' = G_2\setminus \{s_2\}$. By the induction hypothesis, $G_2'$ has no cycle covers
        and thus $G'$ has neither.
        The case for $G' = G \setminus \{t\}$ is analogous.
        
        \item Looking at the covers $C$ of $G$ with an $s,t$-path and disjoint cycles, we notice that there we can only go from $s$ to $t$ using $z$.         
        Any such cover $C$ will include a path $P = s=s_1\rightsquigarrow t_1 \rightarrow z \rightarrow s_2 \rightsquigarrow t=t_2$. Let $P_i  = s_i\rightsquigarrow t_i$,
        $i = 1,2$.        
        Any cycle cover formed by joining $t$ to $s$ with a weight $1$ edge will include the path $P$ as a cycle and further covers $C_1$ and $C_2$ of disjoint odd cycles from $G_1\setminus P$ and $G_2 \setminus P$. This cover has one cycle less than the two corresponding covers in $G_1$ and $G_2$,
        since the two path $P_1$ and $P_2$ in $G_1$ and $G_2$ are merged into $P$. On the other
        hand, we also have one more node, $z$. Therefore, the sign of $C$ is the product of 
        the signs of $P_1 \cup C_1$ and $P_2 \cup C_2$
        and the weight $w(C)$ 
        is $w(P_1 \cup C_1) \cdot w(P_2 \cup C_2)$. 
        Since we can take any combination of $s,t$-covers of $G_1$ and $G_2$,
        $w(G) = p_1(\bfx) \cdot p_2(\bfx) = p(\bfx)$.
    \end{enumerate}
   
Now given a formula, we get an equivalent DPP by applying our inductive construction %
and adding the %
edge $(t,s)$.
\end{proof}

We now generalize Theorem~\ref{VFinDPP} to a more powerful model, the so-called algebraic branching programs. An ABP is a acyclic graph with a source $s$ and a sink $t$. Edges are labeled with constants or variables. The weight of an $s,t$-path is the product of the weights of the edge in the path. The polynomial computed by an ABP is sum of the weights of all $s,t$-path.
ABPs can be efficiently simulated by arithmetric circuits, since they can be written as an iterated matrix multiplication. See \cite{burgisser_act} for more information. Generalizing the construction of the graph in Theorem~\ref{VFinDPP}, we can even show that any ABP can be represented as a DPP. 

\newcommand{\IMM}{\operatorname{IMM}}

\begin{theorem}[ABPs as DPPs]{\label{VBPinDPP}}
    An ABP of size $s$ can be represented as a DPP of size $\poly(s)$.
\end{theorem}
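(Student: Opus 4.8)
The plan is to adapt the graph-gadget construction from the proof of Theorem~\ref{VFinDPP}, but to apply it \emph{directly} to the underlying DAG of the ABP rather than building it up by series/parallel composition: a general ABP graph is not series-parallel, so the additive and multiplicative gadgets alone do not suffice. Write the ABP as $(D,s,t)$ with $D=(V,E)$ a DAG, source $s$, sink $t$ (w.l.o.g.\ $s$ has no incoming and $t$ no outgoing edge), each edge $e$ labelled by a variable or a constant $w(e)$, so the computed polynomial is $p(\bfx)=\sum_{P}\prod_{e\in P}w(e)$, the sum over all $s$-$t$ paths of $D$. I would build a directed graph $G$ with self-loops as follows: (i) for every edge $e=(a,b)\in E$, delete $e$ and splice in a fresh copy of the base-case gadget of Theorem~\ref{VFinDPP} with its label set to $w(e)$, i.e.\ add new internal nodes $\gamma_1^e,\gamma_2^e,\gamma_3^e$, edges $a\to\gamma_1^e\to b$, a directed triangle $\gamma_1^e\to\gamma_2^e\to\gamma_3^e\to\gamma_1^e$, a self-loop of weight $w(e)$ on $\gamma_2^e$ and a self-loop of weight $1$ on $\gamma_3^e$; (ii) add a self-loop of weight $1$ on every vertex of $V$ except $s$ and $t$; (iii) add the back edge $(t,s)$ of weight $1$, exactly as in the last line of the proof of Theorem~\ref{VFinDPP}. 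Every variable now occurs only on a self-loop, so the adjacency matrix of $G$ has the shape $L+\mathbf{X}$ with $\mathbf{X}$ diagonal, and $G$ has $|V|+3|E|=\poly(s)$ vertices.

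The core step is to establish a bijection between the $s$-$t$ paths $P$ of $D$ and the cycle covers of $G$. Since $D$ is acyclic, I would argue that the only directed cycles of $G$ are the self-loops, the per-edge triangles $(\gamma_1^e\gamma_2^e\gamma_3^e)$, and the long cycles of the form $\tilde P+(t,s)$, where $\tilde P$ realizes an $s$-$t$ path $P$ of $D$ by routing each used edge $(a,b)$ through $a\to\gamma_1^{(a,b)}\to b$: locally, a simple cycle reaching $\gamma_1^e$ from $a$ is forced to continue to $b$, and one reaching $\gamma_1^e$ from $\gamma_3^e$ is forced into the triangle --- this is exactly the case analysis of the base case in Theorem~\ref{VFinDPP} --- while such a ``through'' behaves like a forward edge of $D$, so no cycle mixing original and gadget vertices can close up without using $(t,s)$. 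Because $s$ has no self-loop and its only incoming edge is $(t,s)$, every cycle cover of $G$ contains exactly one long cycle, i.e.\ selects a path $P$; once $P$ is fixed the rest of the cover is forced --- original vertices off $P$ take their self-loops, for an edge on $P$ the gadget must put $\gamma_2^e,\gamma_3^e$ on self-loops, and for an edge off $P$ the gadget must use its triangle --- giving exactly one cover per $P$, of weight $\prod_{e\in P}w(e)$ times its sign.

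Finally I would check that the sign of every such cover is $+1$, which is the reason for splicing a gadget into \emph{every} edge, constant-labelled ones included, and not only the variable edges. For the cover associated with a path $P$ on $v_P$ vertices (hence $v_P-1$ edges), the number of cycles is $c=1+(|V|-v_P)+2(v_P-1)+(|E|-(v_P-1))=|V|+|E|$, independent of $P$, while $G$ has $n=|V|+3|E|$ vertices, so $\sgn=(-1)^{n+c}=(-1)^{2|V|+4|E|}=1$. Hence $\det(L+\mathbf{X})=\sum_P\prod_{e\in P}w(e)=p(\bfx)$, so $G$ is a DPP of size $\poly(s)$ computing the polynomial of the ABP; as in the remark following Theorem~\ref{VFinDPP}, one can add $mI$ to $L$ to make it positive definite and undo this by an affine projection if an actual $L$-ensemble is wanted.

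The main obstacle I expect is the structural claim underlying the bijection: that the cycles of $G$ are exactly the three families above, and that once the long cycle $\tilde P$ has been chosen every remaining vertex is covered in a unique way and with weight $1$. Establishing this requires combining the acyclicity of $D$ with the precise local behaviour of the gadget, with particular care at the vertices of $V$ that are shared among many gadgets; the sign computation and the size bound are then routine.
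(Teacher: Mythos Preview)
Your proposal is correct and takes essentially the same route as the paper: splice the base-case gadget of Theorem~\ref{VFinDPP} into every edge of the ABP so that variables occur only on self-loops, add the back edge $(t,s)$, and show that cycle covers of the resulting graph biject with $s$--$t$ paths of the ABP, each with sign $+1$. The only differences are cosmetic: the paper first passes through $\IMM_{n,d}$ and attaches a $3$-cycle gadget (rather than your bare self-loop) to each internal vertex, and it argues the sign via ``every cycle has odd length'' rather than your explicit parity count $(-1)^{n+c}=(-1)^{2|V|+4|E|}$; your variant is in fact slightly leaner.
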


\begin{proof}
    Recall that any ABP is a projection of an iterated matrix multiplication (IMM). Hence, we will show a reduction from the iterated matrix multiplication polynomial
    $\IMM_{n,d}$, which is the $(1,1)$-entry of $d$ variable
    matrices of size $n \times n$, to a graph whose 
    determinant represents a DPP. Our construction uses ideas similar to the ones used in proof of Theorem~\ref{VFinDPP}. We will modify the ABP in such a way that any $s,t$-path in our new graph will have an odd number of vertices. Further, any other cycle in the graph would either be a self-loop or a $3$-cycle. Hence, on connecting $t$ to $s$ via an edge, any node in the graph will always be covered by an odd cycle. Hence, the sign of any cycle cover is always positive. We show our construction for the ABP  corresponding to $\IMM_{2,3}$ in Figure~\ref{imm2,3}.
    
    \begin{figure}[t]
       \centering
       \begin{minipage}[b]{0.45\textwidth}
        \centering
        \begin{tikzpicture}[scale = 1]
        \definecolor{cadmiumgreen}{rgb}{0.0, 0.42, 0.24}
            \node[shape=circle,draw=black, thick, fill=bleudefrance, inner sep = 2pt, label={[text=bleudefrance]left:{$s$}}] (s) at (0,0) {};
            \node[shape=circle,draw=black, thick, fill=cadmiumgreen!50, inner sep = 2pt, label={[text=cadmiumgreen]above:{$a_{11}$}}] (a1) at (2,2) {};
            \node[shape=circle,draw=black, thick, fill=cadmiumgreen!50, inner sep = 2pt, label={[text=cadmiumgreen]below:{$a_{12}$}}] (a2) at (2,-2) {};
            \node[shape=circle,draw=black, thick, fill=cadmiumgreen!50, inner sep = 2pt, label={[text=cadmiumgreen]above:{$a_{21}$}}] (b1) at (4,2) {};
            \node[shape=circle,draw=black, thick, fill=cadmiumgreen!50, inner sep = 2pt, label={[text=cadmiumgreen]below:{$a_{22}$}}] (b2) at (4,-2) {};
            \node[shape=circle,draw=black, thick, fill=bleudefrance, inner sep = 2pt, label={[text=bleudefrance]right:{$t$}}] (t) at (6,0) {};

            \draw (s) edge[->] node[above left] {$x_{11}^{(1)}$} (a1)
                 (s) edge[->] node[below left] {$x_{12}^{(1)}$} (a2);
            
            \draw (a1) edge[->] node[above] {$x_{11}^{(2)}$} (b1)
                 (a1) edge[->] node[right, pos=0.7] {$x_{12}^{(2)}$} (b2);

            \draw (a2) edge[->] node[left, pos=0.3] {$x_{21}^{(2)}$} (b1)
                 (a2) edge[->] node[below] {$x_{22}^{(2)}$} (b2);

            \draw (b1) edge[->] node[above right] {$x_{11}^{(3)}$} (t)
                 (b2) edge[->] node[below right] {$x_{21}^{(3)}$} (t);

        \end{tikzpicture}
        \caption{ABP computing  $\IMM_{2,3}$}
        \label{imm2,3}
  \end{minipage}~~~~~~\begin{minipage}[b]{0.45\textwidth}
        \centering
        \begin{tikzpicture}[xscale = 0.7]
        \definecolor{cadmiumgreen}{rgb}{0.0, 0.42, 0.24}
            \node[shape=circle,draw=black, thick, fill=bleudefrance, inner sep = 2pt, label={[text=bleudefrance]left:{$s$}}] (s) at (0,0) {};
            \node[shape=circle,draw=black, thick, fill=bananamania!50, inner sep = 2pt] (sa1) at (1.5,1) {};
            \node[shape=circle,draw=black, thick, fill=bananamania!50, inner sep = 2pt] (sa2) at (1.5,-1) {};
            \node[shape=circle,draw=black, thick, fill=cadmiumgreen!50, inner sep = 2pt, label={[text=cadmiumgreen]below right:{$a_{11}$}}] (a1) at (3,2) {};
            \node[shape=circle,draw=black, thick, fill=cadmiumgreen!50, inner sep = 2pt, label={[text=cadmiumgreen, label distance=1mm]above:{$a_{12}$}}] (a2) at (3,-2) {};
            \node[shape=circle,draw=black, thick, fill=bananamania!50, inner sep = 2pt] (a1b1) at (4.5,2) {};
            \node[shape=circle,draw=black, thick, fill=bananamania!50, inner sep = 2pt] (a2b2) at (4.5,-2) {};
            \node[shape=circle,draw=black, thick, fill=cadmiumgreen!50, inner sep = 2pt, label={[text=cadmiumgreen]230:{$a_{21}$}}] (b1) at (6,2) {};
            \node[shape=circle,draw=black, thick, fill=cadmiumgreen!50, inner sep = 2pt, label={[text=cadmiumgreen, label distance=1mm]above:{$a_{22}$}}] (b2) at (6,-2) {};
            \node[shape=circle,draw=black, thick, fill=bananamania!50, inner sep = 2pt] (b1t) at (7.5,1) {};
            \node[shape=circle,draw=black, thick, fill=bananamania!50, inner sep = 2pt] (b2t) at (7.5,-1) {};
            \node[shape=circle,draw=black, thick, fill=bleudefrance, inner sep = 2pt, label={[text=bleudefrance]right:{$t$}}] (t) at (9,0) {};
            \node[shape=circle,draw=black, thick, fill=bananamania!50, inner sep = 2pt] (a1b2) at (3.5,0.5) {};
            \node[shape=circle,draw=black, thick, fill=bananamania!50, inner sep = 2pt] (a2b1) at (5.5,0.5) {};

            \node[shape=circle,draw=black, thick, fill=bananamania!50, inner sep = 2pt] (z1) at (1.1,1.5) {};
            \node[shape=circle,draw=black, thick, fill=bananamania!50, inner sep = 2pt] (z2) at (1.7,1.8) {};

            \node[shape=circle,draw=black, thick, fill=bananamania!50, inner sep = 2pt] (z3) at (1.1,-1.5) {};
            \node[shape=circle,draw=black, thick, fill=bananamania!50, inner sep = 2pt] (z4) at (1.7,-1.8) {};

            \node[shape=circle,draw=black, thick, fill=bananamania!50, inner sep = 2pt] (z5) at (4.2,2.5) {};
            \node[shape=circle,draw=black, thick, fill=bananamania!50, inner sep = 2pt] (z6) at (4.8,2.5) {};

            \node[shape=circle,draw=black, thick, fill=bananamania!50, inner sep = 2pt] (z7) at (4.2,-2.5) {};
            \node[shape=circle,draw=black, thick, fill=bananamania!50, inner sep = 2pt] (z8) at (4.8,-2.5) {};

            \node[shape=circle,draw=black, thick, fill=bananamania!50, inner sep = 2pt] (z9) at (7.3,1.8) {};
            \node[shape=circle,draw=black, thick, fill=bananamania!50, inner sep = 2pt] (z10) at (7.9,1.5) {};

            \node[shape=circle,draw=black, thick, fill=bananamania!50, inner sep = 2pt] (z11) at (7.3,-1.8) {};
            \node[shape=circle,draw=black, thick, fill=bananamania!50, inner sep = 2pt] (z12) at (7.9,-1.5) {};

            \node[shape=circle,draw=black, thick, fill=bananamania!50, inner sep = 2pt] (m1) at (2.8,0.6) {};
            \node[shape=circle,draw=black, thick, fill=bananamania!50, inner sep = 2pt] (m2) at (3,-0.2) {};

            \node[shape=circle,draw=black, thick, fill=bananamania!50, inner sep = 2pt] (m3) at (6.2,0.6) {};
            \node[shape=circle,draw=black, thick, fill=bananamania!50, inner sep = 2pt] (m4) at (6,-0.2) {};

            \node[shape=circle,draw=black, thick, fill=bananamania!50, inner sep = 2pt] (k1) at (2.7,2.5) {};
            \node[shape=circle,draw=black, thick, fill=bananamania!50, inner sep = 2pt] (k2) at (3.3,2.5) {};

            \node[shape=circle,draw=black, thick, fill=bananamania!50, inner sep = 2pt] (k3) at (2.7,-2.5) {};
            \node[shape=circle,draw=black, thick, fill=bananamania!50, inner sep = 2pt] (k4) at (3.3,-2.5) {};
            
            \node[shape=circle,draw=black, thick, fill=bananamania!50, inner sep = 2pt] (k5) at (5.7,2.5) {};
            \node[shape=circle,draw=black, thick, fill=bananamania!50, inner sep = 2pt] (k6) at (6.3,2.5) {};
            
            \node[shape=circle,draw=black, thick, fill=bananamania!50, inner sep = 2pt] (k7) at (5.7,-2.5) {};
            \node[shape=circle,draw=black, thick, fill=bananamania!50, inner sep = 2pt] (k8) at (6.3,-2.5) {};
            \draw (s) edge[->] (sa1) (s) edge[->] (sa2) (sa1) edge[->] (a1) (sa2) edge[->] (a2) (a1) edge[->] (a1b1) (a1) edge[->] (a1b2) (a2) edge[->] (a2b2) (a2) edge[->] (a2b1) (a1b1) edge[->] (b1) (a2b2) edge[->] (b2) (a1b2) edge[->] (b2) (a2b1) edge[->] (b1) (b1) edge[->] (b1t) (b2) edge[->] (b2t) (b1t) edge[->] (t) (b2t) edge[->] (t) (t) edge[->, bend left=105, looseness=1.6] (s);
            \draw (sa1) edge[->] (z1) (z1) edge[->] (z2) (z2) edge[->] (sa1) (z1) edge[->, loop above] node[above] {$x_{11}^{(1)}$} (z1) (z2) edge[->, loop above] (z2);
            \draw (sa2) edge[->] (z3) (z3) edge[->] (z4) (z4) edge[->] (sa2) (z3) edge[->, loop below] node[below] {$x_{12}^{(1)}$} (z3) (z4) edge[->, loop below] (z4);
            \draw (a1b1) edge[->] (z5) (z5) edge[->] (z6) (z6) edge[->] (a1b1) (z5) edge[->, loop above] node[above] {$x_{11}^{(2)}$} (z5) (z6) edge[->, loop above] (z6);
            \draw (a2b2) edge[->] (z7) (z7) edge[->] (z8) (z8) edge[->] (a2b2) (z7) edge[->, loop below] node[below] {$x_{22}^{(2)}$} (z7) (z8) edge[->, loop below] (z8);
            \draw (b1t) edge[->] (z9) (z9) edge[->] (z10) (z10) edge[->] (b1t) (z9) edge[->, loop above] node[above] {$x_{11}^{(3)}$} (z9) (z10) edge[->, loop above] (z10);
            \draw (b2t) edge[->] (z11) (z11) edge[->] (z12) (z12) edge[->] (b2t) (z11) edge[->, loop below] node[below] {$x_{21}^{(3)}$} (z11) (z12) edge[->, loop below] (z12);
            \draw (a1b2) edge[->] (m2) (m2) edge[->] (m1) (m1) edge[->] (a1b2) (m2) edge[->, loop left] node[left] {$x_{12}^{(2)}$} (m2) (m1) edge[->, loop left] (m1);
            \draw (a2b1) edge[->] (m4) (m4) edge[->] (m3) (m3) edge[->] (a2b1) (m4) edge[->, loop right] node[right] {$x_{21}^{(2)}$} (m4) (m3) edge[->, loop right] (m3);
            \draw (a1) edge[->] (k1) (k1) edge[->] (k2) (k2) edge[->] (a1) (k1) edge[->, loop above] (k1) (k2) edge[->, loop above] (k2);
            \draw (a2) edge[->] (k3) (k3) edge[->] (k4) (k4) edge[->] (a2) (k3) edge[->, loop below] (k3) (k4) edge[->, loop below] (k4);
            \draw (b1) edge[->] (k5) (k5) edge[->] (k6) (k6) edge[->] (b1) (k5) edge[->, loop above] (k5) (k6) edge[->, loop above] (k6);
            \draw (b2) edge[->] (k7) (k7) edge[->] (k8) (k8) edge[->] (b2) (k7) edge[->, loop below] (k7) (k8) edge[->, loop below] (k8);            
        \end{tikzpicture}
        \vspace*{-30pt}
        \caption{DPP graph $G$ for $\IMM_{2,3}$}
        \label{imm_dpp}
        \end{minipage}
    \end{figure}
    
    For an ABP corresponding to $\IMM_{n,d}$ we create the graph $G$ with the following properties:
    \begin{itemize}[noitemsep,nolistsep]
        \item We include all the nodes in the ABP in our graph $G$. We create an edge form $t$ to $s$ in $G$.
        \item Let $e=(u,v)$ be an edge in the ABP with weight $w_e$. In $G$, we add a node $n_e$ corresponding to $e$. We join $u$ and $n_e$ with a directed edge of weight $1$ going from $u$ to $n_e$, and join $n_e$ to $v$ with a directed edge of weight $1$. Furthermore, we create two auxiliary nodes $n_e^{(1)}$ and $n_e^{(2)}$ with self-loops of weight $w_e$ and $1$ respectively, and form a $3$-cycle between $n_e, n_e^{(1)}, n_e^{(2)}$ by adding edges $(n_e, n_e^{(1)}), (n_e^{(1)}, n_e^{(2)})$ and $(n_e^{(2)}, n_e)$, each of weight $1$.
        \item Furthermore, for every nodes $u$ in the ABP except for the start and end nodes, we add two auxiliary nodes $u^{(1)}$ and $u^{(2)}$ with self-loops of weight $1$. We finally create a $3$-cycle between $u, u^{(1)}, u^{(2)}$ by adding directed edges $(u, u^{(1)}), (u^{(1)}, u^{(2)}), (u^{(2)}, u)$, each of weight $1$.
    \end{itemize}
  The corresponding DPP graph for the ABP in Figure~\ref{imm2,3} is shown in Figure~\ref{imm_dpp}.
  
    Now, let $P = s\rightarrow a_{1,k_1} \rightarrow a_{2,k_2} \rightarrow ... \rightarrow a_{d-1,k_{d-1}} \rightarrow t$ be any $s\rightsquigarrow t$ path in the original ABP. Then, the corresponding path $P'$ in $G$ will be of the form $s \rightarrow n_{(s,a_{1,k_1})} \rightarrow a_{1,k_1} \rightarrow n_{(a_{1,k_1}, a_{2,k_2})} \rightarrow a_{2,k_2} \rightarrow ... \rightarrow a_{d-1,k_{d-1}} \rightarrow n_{(a_{d-1,k_{d-1}},t)} \rightarrow t$. Hence, it is easy to see that any such path will have an odd number of vertices. Further, the weight of this path is $1$ in $G$. Any cycle cover in $G$ must cover $s$ and $t$, which can only be done by a $s,t$-path in $G$ along with the edge $(t,s)$. Thus, any such cycle will have weight $=1$, and an odd number of vertices implying a positive sign. Furthermore, for all the nodes $u$ of the original ABP not occurring on the path $P'$, they can only be covered by the corresponding $3$-cycles $(u , u^{(1)} , u^{(2)} )$, which have weight $1$ and a positive sign. Moreover, for all the nodes that appear on the path $P'$ and were present in the original ABP, the corresponding auxiliary nodes will have to be covered by self-loops of weight $1$. Finally, for the nodes in $P'$ that do not appear in the $ABP$ (like $n_{(a_{i,k_i}, a_{i+1,k_{i+1}})}$), the corresponding auxiliary nodes will have to be covered by self-loops of weight  $w_{(a_{i,k_i}, a_{i+1,k_{i+1}})}$ and $1$ respectively. Hence, the sign of any such cycle cover will be positive, as it only has $3$-cycles and self-loops, and the weight would just be 
    $w_{(s, a_{1,k_1})} \cdot w_{(a_{1,k_1}, a_{2,k_{2}})} \cdot ... \cdot w_{(a_{d-1,k_{d-1}}, t)}  
       = x_{1k_1}^{(1)} \cdot x_{k_1k_2}^{(2)} \cdot ... \cdot x_{k_{d-1}1}^{(d)}$.
    Thus, 
    \begin{align*}
      \det(G) & = \sum_{C} \sgn(C) w(C) \\
              & = \sum_{C} 1 \cdot x_{1k_1}^{(1)} \cdot x_{k_1k_2}^{(2)} \dots x_{k_{d-1}1}^{(d)} \\
              & =  \sum_{\text{$s,t$-path}} x_{1k_1}^{(1)} \cdot x_{k_1k_2}^{(2)} \cdot ... \cdot x_{k_{d-1}1}^{(d)} \\ 
              & = \IMM_{n,d}
    \end{align*}
where the first two sums are over all cycle covers of $G$ and the third
sum is over all $s,t$-path in the ABP corresponding to $\IMM_{n,d}$
\end{proof}

\bibliographystyle{unsrtnat}
\bibliography{references}  %

\end{document}